\def\Var{{\rm Var}}
\newcommand{\indep}{\perp \!\!\! \perp}
\DeclareMathOperator*{\argmin}{arg\,min}
\newtheorem{theorem}{Theorem}
\newtheorem{lemma}{Lemma}
\let\oldReturn\Return
\renewcommand{\Return}{\State\oldReturn}
\pgfplotsset{compat=1.18}
\pgfplotsset{
    colormap={uni}{rgb255(0cm)=(0, 170, 220); rgb255(1cm)=(255, 20, 11)}
}
\definecolor{myblue}{RGB}{0, 170, 220}
\title{Assumption-Lean Quantile Regression}
\date{April, 2024} 					
\title{Assumption-Lean Quantile Regression}
\author{Georgi Baklicharov\\
Department of Applied Mathematics, Computer Science and Statistics\\
Ghent University\\
Krijgslaan 281 (S9), 9000 Ghent, Belgium\\
\href{mailto:georgi.baklicharov@ugent.be}{
\texttt{georgi.baklicharov@ugent.be}}
\And
Christophe Ley\\
Department of Mathematics\\
Université du Luxembourg\\
Maison du Nombre, Avenue de la Fonte 6, L-4364 Esch-sur-Alzette, Luxembourg\\
\href{mailto:georgi.baklicharov@ugent.be}{\texttt{christophe.ley@uni.lu}}
\And
Vanessa Gorasso\\
Department of Epidemiology and Public Health\\
Sciensano\\
Rue J Wytsman 14, 1050 Brussels, Belgium\\
\href{mailto:georgi.baklicharov@ugent.be}{\texttt{vanessa.gorasso@sciensano.be}}
\And
Brecht Devleesschauwer\\
Department of Epidemiology and Public Health\\
Sciensano\\
Rue J Wytsman 14, 1050 Brussels, Belgium\\
\href{mailto:georgi.baklicharov@ugent.be}{\texttt{brecht.devleesschauwer@sciensano.be}}
\And
Stijn Vansteelandt\\
Department of Applied Mathematics, Computer Science and Statistics\\
Ghent University\\
Krijgslaan 281 (S9), 9000 Ghent, Belgium\\
\href{mailto:georgi.baklicharov@ugent.be}{\texttt{stijn.vansteelandt@ugent.be}}
}
\begin{document}

\tikzstyle{every node}=[draw=black,thick,anchor=west]
\tikzstyle{selected}=[draw=red,fill=red!30]
\tikzstyle{optional}=[dashed,fill=gray!50]

\maketitle

\begin{abstract}
Quantile regression is a powerful tool for detecting exposure-outcome associations given covariates across different parts of the outcome's distribution, but has two major limitations when the aim is to infer the effect of an exposure. Firstly, the exposure coefficient estimator may not converge to a meaningful quantity when the model is misspecified, and secondly, variable selection methods may induce bias and excess uncertainty, rendering inferences biased and overly optimistic. In this paper, we address these issues via partially linear quantile regression models which parametrize the conditional association of interest, but do not restrict the association with other covariates in the model. We propose consistent estimators for the unknown model parameter by mapping it onto a nonparametric main effect estimand that captures the (conditional) association of interest even when the quantile model is misspecified. This estimand is estimated using the efficient influence function under the nonparametric model, allowing for the incorporation of data-adaptive procedures such as variable selection and machine learning. Our approach provides a flexible and reliable method for detecting associations that is robust to model misspecification and excess uncertainty induced by variable selection methods. The proposal is illustrated using simulation studies and data on annual health care costs associated with excess body weight.
\end{abstract}

\section{Introduction}
Quantile regression, originally introduced by \cite{koenkerbassett}, is a statistical method that provides a flexible framework for modeling the relationship between a response variable and predictors. Unlike traditional linear regression models that focus on estimating the conditional mean of the response variable, quantile regression allows for the estimation of conditional quantiles, which can provide valuable and broader insights into the distributional characteristics of the response variable. It has gained popularity in recent years thanks to its ability to capture heterogeneity in the data and provide more robust estimates in the presence of outliers, skewed and/or heavy-tailed distributions \citep{koenker2005quantile,koenker2017,koenker2017handbook}. Quantile regression has been applied in a variety of fields, including economics, finance, health, and environmental sciences, among others \citep[e.g.][]{koenkerhallock2001,YuLuStander} but deserves to be used more frequently, because of the rich description it provides of various relevant aspects of data distributions, combined with appealing ease of interpretation \citep{das2019quantile}.

In this work, we focus on the estimation of causal quantile treatment or exposure effects under standard causal assumptions \citep{hernan2020causal}. The quantile treatment effect has originally been defined by \cite{doksum1974} and \cite{lehmann1975} in a two-sample setting as the horizontal distance (i.e.\ the distance along the horizontal axis between two fixed points of different distribution functions) between the outcome distribution of exposed versus unexposed. However, extension to allow for covariate adjustment via standard quantile regression raises concerns due to potential model misspecification and post-selection inference. 

Model misspecification is a concern for most statistical methods. When the goal is solely prediction, this is often less problematic. Wrong models can still deliver good predictions, because predictions are commonly obtained by minimizing a measure of prediction error over the considered model class. Moreover, unbiased estimates of prediction error are obtained, even when the model is misspecified. However, when estimating exposure effects, model misspecification typically causes more severe problems. For instance, in (generalized) linear regression models that have a main exposure effect and covariates, the ordinary least squares estimator generally fails to converge to a comprehensive measure of the conditional relationship between the exposure and outcome in the presence of model misspecification \citep{AssumptionLean}. In quantile regression models, the situation is even worse as there is no good understanding of where the effect of a randomized treatment converges to when the model is wrong. One possible solution is to make the quantile model sufficiently complex by including interactions, higher-order terms, transformations, etc. However, it is generally unclear how much additional complexity is needed, and moreover, this added complexity may lead to overfitting and thus estimates with large bias and standard errors. In contrast, simple models help to make results insightful and easy to communicate. In practice, one is therefore often torn between keeping the model simple so that it is comprehensible and making it complex so that it represents the ground truth \citep{breiman2001}. A common way of finding a trade-off between model simplicity and model correctness is by using data-adaptive procedures, such as variable selection algorithms. However, this also raises concerns because data-adaptive procedures usually give rise to additional uncertainty that is not taken into account in standard error calculations. In particular, standard inference commonly ignores the fact that making decisions on whether or not to add certain covariates was data-driven, while it is highly unlikely to make the exact same decisions under repeated sampling \citep{leebpotscher2005}. This problem of post-selection inference renders standard inferences overly optimistic.

In previous literature, attempts have been made to circumvent this issue by employing nonparametric approaches, focusing on marginal effects. However, the majority of these approaches is limited to binary exposures \citep[e.g.][]{melly2006,firpo2007,chernozhukov2013,DONALD2014383,caracciolo2017,athey2021semiparametric}. \cite{Ai2022} extended the work of \cite{DONALD2014383} to continuous exposures, but still focussing on marginal quantile effects. Instead, our objective is to infer conditional quantile effects as these are less vulnerable to extrapolation (see the discussion section). Our proposal will be in line with recent developments in causal machine learning \citep{DIAZ201739,belloni2017,kallus2019localized} in the sense that we allow the use of machine learning techniques, while ensuring valid post-selection inference.

Throughout this work, we will focus on the conditional association between an outcome $Y$ and a possibly continuous exposure $A$, given a vector of measured covariates denoted by $L$ that are sufficient to control for confounding. Therefore, we consider the partially linear quantile model
\begin{align}\label{model}
    Q_\tau(Y\mid A,L) = \beta_\tau A + \omega_\tau(L),
\end{align}
where $Q_\tau(Y\mid A,L)$ denotes the $\tau$-quantile of the conditional distribution of $Y$ given $A$ and $L$ for a given value of $0<\tau<1$, $\beta_\tau$ is an unknown scalar parameter and $\omega_\tau(.)$ is an unknown function. Both $\beta_\tau$ and $\omega_\tau(.)$ may be quantile-specific. The parameter $\beta_\tau$ in this model reflects the conditional association of interest, for which we will develop a debiased machine learning estimator \citep{chernozhukov2018}. First, we will assume model (\ref{model}) to hold. Later, we explain how our estimator still captures the association of interest if the model is misspecified.

Our work builds on the large literature on partially linear quantile models \citep[e.g.][]{lee_2003,sun2005,WU20101607,WU2014170,lv2015quantile,SherwoodWang2016,nnplqr}. Traditionally, two popular approaches are being considered for estimating the nonparametric component of the partially linear quantile model: kernel estimators \citep[e.g.][]{lee_2003,sun2005,lv2015quantile}, and splines \citep[e.g.][]{he1996,SherwoodWang2016}. Those two approaches are also common in (partially linear) single-index quantile models \citep[e.g.][]{WU20101607,WU2014170}. However, those prior developments have a number of important shortcomings. First, in high-dimensional applications, they force high computational demands and challenges stemming from reliance on kernel weighting or splines. To accommodate this concern, \cite{nnplqr} proposed neural networks to estimate the nonparametric component. However, a further concern with this and the aforementioned proposals is that the probability limit of the estimators has no straightforward interpretation when the partially linear quantile model is misspecified (e.g., due to exposure-covariate interactions). In that case, existing proposals also fail to provide valid data-adaptive inference (not even for the probability limit of the estimator because it is a function of high-dimensional nuisance parameters with complex distributions). Our proposal provides a solution for both shortcomings by allowing the use of flexible estimators that are known to work well in high dimensions, and by ensuring an inference that is correct even under model misspecification. We take inspiration from related work for generalized linear models in a discussion paper by \cite{AssumptionLean} and for Cox models \citep{alcox}, which are both closely related to recent developments on Targeted Maximum Likelihood Estimation (TMLE) \citep{vanderlaanrose2011} and Debiased Machine Learning (DML) \citep{chernozhukov2018}.

\section{Data-adaptive inference for $\beta_\tau$}\label{sec:inference}
\subsection{A plug-in estimator}
To construct a debiased machine learning estimator for $\beta_\tau$ in model (\ref{model}), we use the fact that it can be expressed as
\begin{equation}\label{estimand2}
\Psi_\tau = \frac{E[(A-A^*)\{Q_\tau(Y\mid A,L)-Q_\tau(Y\mid A^*,L)\}]}{E\{(A-A^*)^2\}},
\end{equation}
when model (\ref{model}) holds (see Appendix \ref{appA} for a proof). In this model-free estimand, $A$ and $A^*$ are two exposure values independently drawn from stratum $L$. In Section \ref{sec:estimand} we justify the choice of this estimand by arguing that it is an interpretable measure of exposure effect, even if model (\ref{model}) is misspecified. Intuitively, this can already be seen because it is a weighted average of differences in quantiles of the outcome for individuals with different exposure values, but the same value for $L$; it is thus guaranteed to summarize the conditional association between $A$ and $Y$ given $L$, even if the model is wrong. For this nonparametric estimand, we will develop nonparametric inference. 

The model-free estimand $\Psi_\tau$ can also be written as (see Appendix \ref{appA})
\begin{equation}\label{estimand3}
    \frac{E\left(\{A-E(A\mid L)\}\left[Q_\tau(Y\mid A,L)-E\{Q_\tau(Y\mid A,L)\mid L\}\right]\right)}{E[\{A-E(A\mid L)\}^2]}.
\end{equation}
A so-called plug-in estimator is now straightforwardly obtained by substituting $E(A\mid L)$, $Q_\tau(Y\mid A,L)$ and $E\{Q_\tau(Y\mid A,L)\mid L\}$ by consistent estimators $\hat{E}(A\mid L)$, $\hat{Q}_\tau(Y\mid A,L)$ and $\hat{E}\{\hat{Q}_\tau(Y\mid A,L)\mid L\}$, respectively, and by replacing population expectations by sample averages. This results in
\begin{equation}\label{plugin}
    \frac{\sum_{i=1}^n\left(\{A_i-\hat{E}(A_i\mid L_i)\}\left[\hat{Q}_\tau(Y_i\mid A_i,L_i)-\hat{E}\{\hat{Q}_\tau(Y_i\mid A_i,L_i)\mid L_i\}\right]\right)}{\sum_{i=1}^n\{A_i-\hat{E}(A_i\mid L_i)\}^2}.
\end{equation}
To estimate the nuisance parameters, we will use flexible data-adaptive procedures in line with recent developments on debiased machine learning (DML) \citep{vanderlaanrose2011,chernozhukov2018}. These procedures may include variable selection techniques for quantile regression, as well as more advanced machine learning algorithms such as random forests or neural networks.

However, plug-in estimators like (\ref{plugin}) typically give rise to two problems. On the one hand, data-adaptive procedures are optimized towards minimal prediction error, but prediction is not our aim. Instead, our aim is to obtain well-behaved estimators for this estimand. For example, a poor selection of variables may induce confounding or selection bias. On the other hand, data-adaptive estimators $\hat{E}(A\mid L)$, $\hat{Q}_\tau(Y\mid A,L)$ and $\hat{E}\{\hat{Q}_\tau(Y\mid A,L)\mid L\}$ have a nonstandard distribution, which is usually also unknown. This makes quantifying the uncertainty in these estimators extremely difficult. Our proposal addresses these concerns.

\subsection{Debiased Machine Learning (DML)}
In view of the aforementioned concerns, we aim to establish inference for our main effect estimand (\ref{estimand2}) under the nonparametric model. This enables us to avoid dependence on modeling constraints, which could result in excessively optimistic inferences as well as computational difficulties. Like other debiased machine learning strategies, we will make use of the efficient influence function of the estimand to accomplish this goal \citep{Pfanzagl1990,bickel1993efficient}. This is a mean zero functional of the observed data and the data-generating distribution, which characterizes the estimand's sensitivity to arbitrary (smooth) changes in the data-generating law. Theorem \ref{theorem1}, of which the proof can be found in Appendix \ref{appB}, gives the efficient influence function of the estimand (\ref{estimand3}).
\begin{theorem}\label{theorem1}
The efficient influence function of the main effect estimand $\Psi_\tau$, defined by (\ref{estimand3}), under the nonparametric model is
\begin{align}\label{eif}
\begin{split}
    \phi(Y,A,L)&=\frac{A-E(A\mid L)}{E[\{A-E(A\mid L)\}^2]}\bigg[Q_\tau(Y\mid A,L) - E\left\{Q_\tau(Y\mid A,L)\mid L\right\}\\
    &\qquad\qquad+ \frac{\tau-I\{Y\leq Q_\tau(Y\mid A,L)\}}{f\{Q_\tau(Y\mid A,L)\mid A,L\}} - \Psi_\tau \{A-E(A\mid L)\} \bigg],
\end{split}
\end{align}
where $I(.)$ and $f(.\mid A,L)$ denote an indicator function and the conditional density of $Y$ given $A$ and $L$, respectively.
\end{theorem}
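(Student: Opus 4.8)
The plan is to derive the efficient influence function (EIF) of $\Psi_\tau$ by the standard ``pathwise differentiation'' route: treat $\Psi_\tau$ as a functional of the data-generating law $P$, introduce a one-dimensional regular parametric submodel $\{P_\epsilon\}$ with score $s(Y,A,L)$ at $\epsilon=0$, differentiate $\Psi_\tau(P_\epsilon)$ with respect to $\epsilon$, and rewrite the derivative as $E[\phi(Y,A,L)\, s(Y,A,L)]$ for some mean-zero $\phi$ in the Hilbert space of square-integrable scores. Since the model is nonparametric, the tangent space is the whole space of mean-zero functions, so the $\phi$ obtained this way is automatically \emph{the} efficient influence function (no projection step is needed). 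Write $N_\tau = E(\{A-E(A\mid L)\}[Q_\tau(Y\mid A,L)-E\{Q_\tau(Y\mid A,L)\mid L\}])$ and $D_\tau = E[\{A-E(A\mid L)\}^2]$ so that $\Psi_\tau = N_\tau/D_\tau$, and use the quotient rule: the EIF of $\Psi_\tau$ equals $(1/D_\tau)\{\mathrm{EIF}(N_\tau) - \Psi_\tau\,\mathrm{EIF}(D_\tau)\}$. This immediately explains the outer factor $(A-E(A\mid L))/D_\tau$ and the $-\Psi_\tau\{A-E(A\mid L)\}$ term, once one checks that the $E(A\mid L)$-centering pieces contribute EIF terms that are orthogonal to (or cancel against) the rest.

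I would organize the differentiation into the contributions of the distinct ``ingredients'' of the estimand: (i) the outer expectation $E[\cdot]$, which contributes the integrand itself evaluated at the observation (minus its mean); (ii) the conditional mean $E(A\mid L)$, whose perturbation contributes a term proportional to $A-E(A\mid L)$ times the score, which one shows integrates to zero against the relevant factors because $Q_\tau(Y\mid A,L)-E\{Q_\tau(Y\mid A,L)\mid L\}$ has conditional mean zero given $L$ (so the $E(A\mid L)$ perturbation drops out of $N_\tau$, and in $D_\tau$ it produces a contribution that vanishes in the same way); (iii) the outer conditional mean $E\{Q_\tau(Y\mid A,L)\mid L\}$, whose perturbation likewise contributes a term with conditional mean zero given $L$ and hence cancels against $A-E(A\mid L)$; and (iv) the conditional quantile $Q_\tau(Y\mid A,L)$ itself. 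Ingredient (iv) is where the interesting term appears: perturbing the conditional law of $Y$ given $(A,L)$ changes $Q_\tau$, and the implicit-function / delta-method computation for a quantile gives $\partial_\epsilon Q_{\tau,\epsilon}(Y\mid A,L)\big|_{0} = -E[\,s(Y,A,L)\,(I\{Y\le Q_\tau\}-\tau)\mid A,L\,]/f\{Q_\tau(Y\mid A,L)\mid A,L\}$, which is exactly the pathwise derivative of a conditional quantile. Reading off the Riesz representer of this linear functional of $s$ yields the term $(\tau-I\{Y\le Q_\tau(Y\mid A,L)\})/f\{Q_\tau(Y\mid A,L)\mid A,L\}$ in $\phi$.

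Assembling (i)--(iv), collecting the factor $(A-E(A\mid L))/D_\tau$, and verifying that the $E(A\mid L)$- and $E\{Q_\tau\mid L\}$-perturbation terms have conditional mean zero given $L$ (so they are orthogonal to the scores of the $L$-marginal and contribute nothing net), one obtains precisely the displayed $\phi(Y,A,L)$; a final sanity check is that $E[\phi]=0$ and that $\phi$ indeed lies in the nonparametric tangent space, which holds because every bracketed piece has the right conditional-mean-zero structure. I expect the main obstacle to be ingredient (iv): carefully justifying the differentiability of the conditional quantile map $P\mapsto Q_\tau(Y\mid A=a,L=\ell)$ along the submodel (one needs $f\{Q_\tau\mid a,\ell\}>0$ and enough smoothness/regularity so that the implicit-function argument applies and the interchange of differentiation and expectation is valid), and then correctly propagating this derivative through the outer expectation and the nested conditional mean $E\{Q_\tau(Y\mid A,L)\mid L\}$ without double-counting. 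The bookkeeping of which perturbation feeds into which occurrence of $Q_\tau$ (the ``direct'' one and the one inside the inner conditional expectation) is the step most prone to error, so I would handle it by writing $\Psi_\tau$ as an explicit functional of a short list of nuisance functionals and differentiating term by term, relegating the quantile-perturbation lemma to a clearly separated step.
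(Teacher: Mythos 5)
Your proposal is correct and reaches the stated $\phi$ by the same underlying strategy as the paper: pathwise differentiation of $\Psi_\tau=N_\tau/D_\tau$ along a submodel, a quotient rule, and a separate lemma for the derivative of the conditional quantile. The difference is in the formalism. The paper follows \cite{hines2022} and perturbs toward a point mass, $\mathcal{P}_t=(1-t)\mathcal{P}+t\delta_{(\tilde y,\tilde a,\tilde l)}$, so that $\frac{d}{dt}\Psi_\tau(\mathcal{P}_t)\big|_{t=0}$, read as a function of $(\tilde y,\tilde a,\tilde l)$, \emph{is} the EIF with no Riesz-representation step; its quantile lemma (Lemma 1) is obtained by the Leibniz rule applied to $\tau=\int_{-\infty}^{Q_\tau}f(y\mid a,l)\,dy$ under the point-mass contamination. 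You instead use a general score $s$, write the derivative as $E[\phi s]$, and read off the representer, with the quantile derivative $\partial_\epsilon Q_{\tau,\epsilon}\big|_0=-E[s\,(I\{Y\le Q_\tau\}-\tau)\mid A,L]/f\{Q_\tau\mid A,L\}$ obtained by implicit differentiation of $F_\epsilon(Q_{\tau,\epsilon}\mid a,l)=\tau$; the two lemmas are equivalent. One genuine simplification on your side: by keeping the numerator in its doubly centered form $E(\{A-E(A\mid L)\}[Q_\tau-E\{Q_\tau\mid L\}])$, the perturbations of $E(A\mid L)$, of $E\{Q_\tau\mid L\}$, and of the copy of $Q_\tau$ inside the inner conditional expectation all vanish by conditional-mean-zero orthogonality, whereas the paper works from the uncentered form $E[\{A-E(A\mid L)\}Q_\tau]$ and must carry the corresponding point-mass terms through a long explicit integration before they recombine into $-E\{Q_\tau\mid L\}$. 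The trade-off is that your route requires correctly identifying the conditional score of $Y$ given $(A,L)$ and the representer of each linear functional of $s$, while the point-mass route is more mechanical. The regularity caveats you flag ($f\{Q_\tau\mid a,l\}>0$, differentiability of the quantile map) are exactly the ones the paper also leaves implicit.
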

If the infinite-dimensional nuisance parameters $E(A\mid L)$, $Q_\tau(Y\mid A,L)$, $E\left\{Q_\tau(Y\mid A,L)\mid L\right\}$ and $f\{Q_\tau(Y\mid A,L)\mid A,L\}$ indexing the efficient influence function were known, then it would follow from its mean zero property that a consistent estimator $\Tilde{\Psi}_\tau$ of $\Psi_\tau$ could be obtained as the value of $\Psi_\tau$ that makes the sample average of the influence functions zero. The resulting estimator’s asymptotic distribution would be governed by this influence function in the sense that
\begin{align}\label{asymplin}
    \sqrt{n}(\Tilde{\Psi}_\tau-\Psi_\tau) &= \frac{1}{\sqrt{n}}\sum_{i=1}^n \frac{A_i-E(A_i\mid L_i)}{E[\{A_i-E(A_i\mid L_i)\}^2]}\bigg[Q_\tau(Y_i\mid A_i,L_i) - E\left\{Q_\tau(Y_i\mid A_i,L_i)\mid L_i\right\}\\
    &\qquad+ \frac{\tau-I\{Y_i\leq Q_\tau(Y_i\mid A_i,L_i)\}}{f\{Q_\tau(Y_i\mid A_i,L_i)\mid A_i,L_i\}} - \Psi_\tau \{A_i-E(A_i\mid L_i)\} \bigg] + o_p(1).
\end{align}
It follows that the estimator $\Tilde{\Psi}_\tau$ is asymptotically normal by the central limit theorem. Moreover, its bias approaches zero at a faster rate than the standard error, and its variance can be estimated by computing the sample variance of the efficient influence function. However, the estimator $\Tilde{\Psi}_\tau$ is practically infeasible because the efficient influence function involves unknown nuisance parameters. We will therefore substitute these with consistent data-adaptive estimators and denote the resulting estimator $\hat{\Psi}_\tau$. In this way, we obtain the closed-form estimator
\begin{align}\label{DML}
\begin{split}
    \hat{\Psi}_\tau &= \frac{1}{n}\sum_{i=1}^n\frac{A_i-\hat{E}(A_i\mid L_i)}{\frac{1}{n}\sum_{i=1}^n\{A_i-\hat{E}(A_i\mid L_i)\}^2}\Bigg[\hat{Q}_\tau(Y_i\mid A_i,L_i)\\
    &\qquad\qquad\qquad- \hat{E}\{\hat{Q}_\tau(Y_i\mid A_i,L_i)\mid L_i\} + \frac{\tau-I\{Y_i\leq \hat{Q}_\tau(Y_i\mid A_i,L_i)\}}{\hat{f}\{\hat{Q}_\tau(Y_i\mid A_i,L_i)\mid A_i,L_i\}} \Bigg].
\end{split}
\end{align}

From the general theory on nonparametric estimation \citep{Pfanzagl1990,bickel1993efficient} and the results in the Appendix \ref{appB}, it follows that, when the nuisance parameters are estimated on a sample independent to the one on which $\hat{\Psi}_\tau$ is evaluated (to mitigate overfitting bias) and under specific convergence rate assumptions detailed in Theorem 2, the estimator is asymptotically linear in the sense that (\ref{asymplin}) holds with $\hat{\Psi}_\tau$ instead of $\Tilde{\Psi}_\tau$. Consequently, its variance can be consistently estimated as 1 over $n$ times the sample variance of the influence functions, i.e.\ $1/n^2\sum_{i=1}^n\hat{\phi}(O_i)^2$,
where $\hat{\phi}$ is defined like $\phi$ but with all nuisance parameters substituted by consistent estimates.

Estimating the nuisance parameters on an independent sample can be attained via sample splitting \citep{Zheng2011} or cross-fitting \citep{chernozhukov2018}. In particular, the data can be split into $K$ disjoint subsets $I_k, \, k=1,\ldots,K$. Let $I_k^C$ denote the complement of the subset $I_k$. For each $k$, one can train data-adaptive nuisance parameter estimators using only the data in $I_k^C$. In calculating $\hat{\Psi}_\tau$, when a subject $i$ belongs to subset $I_k$, then the nuisance parameter estimators trained on $I_k^C$ should be used to plug in the nuisance parameter estimates in (\ref{DML}). When flexible data-adaptive estimates are used for the nuisance parameters, then this cross-fitting is necessary for the asymptotic validity of the proposal.

The advantage of utilizing the efficient influence function as the foundation for the estimator is that it exhibits consistent asymptotic behavior, regardless of whether it is based on the known nuisance parameters or consistent estimates thereof. This is contingent on the convergence rate conditions detailed in the following theorem. A proof is given in Appendix \ref{appB}.
\begin{theorem}\label{theorem2}
    The proposed estimator $\hat{\Psi}_\tau$ is asymptotically linear when
    \begin{enumerate}
        \item all nuisance parameter estimators are consistent,
        \item all nuisance parameters are estimated on a sample independent to the one on which $\hat{\Psi}_\tau$ is evaluated,
        \item all of the following terms are $o_p(n^{-1/2})$:
        \begin{align}
            &E\left[ \{\hat{Q}_\tau(Y\mid A,L)-Q_\tau(Y\mid A,L)\}^2 \right],\label{th2c1}\\ 
            &E\left( \left[1-\frac{f\{Q_\tau(Y\mid A,L)\mid A,L\}}{\hat{f}\{\hat{Q}_\tau(Y\mid A,L)\mid A,L\}}\right]^2 \right)^{1/2} E\left[ \{\hat{Q}_\tau(Y\mid A,L)-Q_\tau(Y\mid A,L)\}^2 \right]^{1/2},\label{th2c2}\\
            &E\left[\{E(A\mid L)-\hat{E}(A\mid L)\}^2\right]^{1/2} E\left( \left[E(Q_\tau(Y\mid A,L)\mid L)-\hat{E}\{\hat{Q}_\tau(Y\mid A,L)\mid L\}\right]^2 \right)^{1/2},\nonumber\\
            &E\left[\{E(A\mid L)-\hat{E}(A\mid L)\}^2\right],\nonumber
        \end{align}
        where the last condition is only needed when $\Psi_\tau\neq 0$, and where the expectations are conditional on the training sample and thus only over the distribution of $Y$, $A$ and $L$,
        \item $A-\hat{E}(A\mid L) = O_p(1)$,
        \item and $\;E[\{A-E(A\mid L)\}^2]>\epsilon$, $\;\hat{E}[\{A-\hat{E}(A\mid L)\}^2]>\epsilon\;$ and $\;\hat{f}\{\hat{Q}_\tau(Y\mid A,L)\mid A,L\}>\epsilon$ with probability 1 for some $\epsilon>0$.
    \end{enumerate}
\end{theorem}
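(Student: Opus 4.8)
The plan is to regard $\hat\Psi_\tau$ as the solution of a cross-fitted estimating equation: replacing $\Psi_\tau$ in the efficient influence function (\ref{eif}) by a free argument $\psi$ and the nuisances by their estimates, and setting the sample average to zero, returns exactly the closed form (\ref{DML}). Write $\hat\phi_k(O;\psi)$ for this estimating function with nuisances $\hat\eta_k$ trained on $I_k^C$ and plugged in for observations $i\in I_k$, and $\phi(O;\psi)$ for the same object at the true nuisances $\eta$, so that $\phi(O;\Psi_\tau)$ is (\ref{eif}). Observe that $\phi(O;\psi)$ is affine in $\psi$ with population slope $P[\partial_\psi\phi(O;\psi)]=-1$, and that the defining relation is $\frac1n\sum_k\sum_{i\in I_k}\hat\phi_k(O_i;\hat\Psi_\tau)=0$. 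As a preliminary I would record $\hat\Psi_\tau-\Psi_\tau=o_p(1)$, which follows from conditions 1, 4 and 5 and a fold-wise law of large numbers showing that the empirical slope of the estimating equation converges to $-1$.

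The backbone is the standard debiased-machine-learning decomposition, carried out fold by fold,
\begin{equation*}
\mathbb{P}_{n,k}\big[\hat\phi_k(O;\psi)\big]=(\mathbb{P}_{n,k}-P)\big[\hat\phi_k(O;\psi)\big]+P\big[\hat\phi_k(O;\psi)\big],
\end{equation*}
with $\mathbb{P}_{n,k}$ the average over $I_k$ and $P$, $E$ taken over a fresh observation with $\hat\eta_k$ held fixed. For the empirical-process term, the independence of $\hat\eta_k$ from the data in $I_k$ lets me condition on the training sample and bound the conditional variance of $(\mathbb{P}_{n,k}-P)[\hat\phi_k(O;\psi)-\phi(O;\psi)]$ by $|I_k|^{-1}E[\{\hat\phi_k(O;\psi)-\phi(O;\psi)\}^2]$; conditions 1, 4 and 5 make this $L_2$ distance $o_p(1)$ (the denominators $\hat f$ and $\hat E[\{A-\hat E(A\mid L)\}^2]$ are bounded away from zero by condition 5 and $A-\hat E(A\mid L)=O_p(1)$ by condition 4), so Chebyshev gives $(\mathbb{P}_{n,k}-P)[\hat\phi_k(O;\psi)]=(\mathbb{P}_{n,k}-P)[\phi(O;\psi)]+o_p(n^{-1/2})$.

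The crux is the bias term $P[\hat\phi_k(O;\Psi_\tau)]$. I would evaluate it by conditioning on $(A,L)$, using $E[\tau-I\{Y\le q\}\mid A,L]=\tau-F(q\mid A,L)$ and Taylor-expanding $F(\hat Q_\tau(Y\mid A,L)\mid A,L)$ around the true quantile, which yields a linear term $-f\{Q_\tau(Y\mid A,L)\mid A,L\}\{\hat Q_\tau-Q_\tau\}$ and a quadratic remainder controlled by (\ref{th2c1}); replacing $\hat E[\{A-\hat E(A\mid L)\}^2]$ by its population analogue is harmless by condition 5 and consistency. Collecting terms, the first-order contributions cancel by Neyman orthogonality of the efficient influence function, leaving a sum of products of nuisance errors: a quantile/density cross term bounded by (\ref{th2c2}), a cross term between the errors in $\hat E(A\mid L)$ and in $\hat E\{\hat Q_\tau\mid L\}$, the squared error in $\hat Q_\tau$, and — only when $\Psi_\tau\neq 0$ — a squared error in $\hat E(A\mid L)$ arising from the $\Psi_\tau\{A-E(A\mid L)\}$ piece together with the denominator. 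By Cauchy–Schwarz (and condition 4 where $A-\hat E(A\mid L)$ enters) each is $o_p(n^{-1/2})$ under condition 3, so $P[\hat\phi_k(O;\Psi_\tau)]=o_p(n^{-1/2})$.

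Assembling, averaging the fold-wise identities gives $0=(\mathbb{P}_n-P)[\phi(O;\hat\Psi_\tau)]+o_p(n^{-1/2})$; since $\phi$ is affine in $\psi$ with empirical slope $O_p(1)$ and population slope $-1$ and $\hat\Psi_\tau-\Psi_\tau=o_p(1)$, one may replace $\hat\Psi_\tau$ by $\Psi_\tau$ at the price of $(\hat\Psi_\tau-\Psi_\tau)\,O_p(n^{-1/2})=o_p(n^{-1/2})$, and then $P[\phi(O;\Psi_\tau)]=0$ yields $\hat\Psi_\tau-\Psi_\tau=\mathbb{P}_n[\phi(O;\Psi_\tau)]+o_p(n^{-1/2})$, i.e.\ the asymptotic linearity (\ref{asymplin}) with $\hat\Psi_\tau$ in place of $\tilde\Psi_\tau$. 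I expect the third step to be the main obstacle: because (\ref{eif}) inserts $\hat Q_\tau$ both into $\hat Q_\tau-\hat E\{\hat Q_\tau\mid L\}$ and into the correction $\{\tau-I\{Y\le\hat Q_\tau\}\}/\hat f$, one must expand the conditional c.d.f.\ with care and track every cross term to confirm that all first-order errors cancel and that the second-order remainder decomposes into precisely the four quantities of condition 3 — the density-ratio factor $1-f/\hat f$ and the need for $\hat f$ bounded below being the delicate points.
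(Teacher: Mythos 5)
Your proposal is correct and follows essentially the same route as the paper: the paper's Appendix~\ref{appB} proof is exactly the von Mises (DML) decomposition into an empirical-process term ($R_1$, handled by sample splitting/conditioning on the training fold) and a structured drift term ($R_2$), with the latter analyzed by Taylor-expanding $P(Y\le\hat\zeta\mid A,L)$ around the true quantile so that the first-order quantile error combines with the inverse-density correction into the product $(1-f/\hat f)(\hat\zeta-\zeta)$, leaving precisely the four remainder products of condition~3. The only cosmetic difference is that you phrase the argument as a fold-wise Z-estimator expansion with an explicit Chebyshev bound, whereas the paper writes the expansion globally and defers the $R_1$ bound to the supplementary material of \cite{AssumptionLean}.
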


The rate conditions stated in Theorem 2 will be satisfied e.g.\ if all nuisance parameters are consistently estimated at a rate faster than $n^{1/4}$, but are slightly less restrictive because of the rate conditions on product terms. The third condition in Theorem 2 states that (\ref{th2c1}) needs to be $o_p(n^{-1/2})$. However, if a faster convergence rate is achieved, then a slower convergence rate for $\hat{f}\{\hat{Q}_\tau(Y\mid A,L)\mid A,L\}$ may suffice, as long as the product (\ref{th2c2}) converges sufficiently fast. Similarly, if $E(A\mid L)$ can be estimated at a rate much faster than $n^{1/4}$, $E\{Q_\tau(Y\mid A,L)\mid L\}$ can be estimated at a slower rate. Under particular smoothness/sparsity assumptions, many data-adaptive methods can achieve these convergence rates (a summary is given in \cite{chernozhukov2018}).

The proposed estimator thus mitigates the plug-in bias that arises from the bias-variance trade-off in flexible data-adaptive techniques. These techniques are usually fine-tuned to minimize prediction error, leading to plug-in bias. It is shown that (see e.g.\ \cite{vdv2000asymptotic}) this plug-in bias can be approximated as minus the sample average of the influence functions. Since our estimator (\ref{DML}) is defined as the value of $\Psi_\tau$ that makes this sample average zero, the plug-in bias will be approximately zero by construction. Our proposal further takes into account the excess uncertainty induced by the use of data-adaptive procedures (e.g.\ variable selection or machine learning algorithms). By removing the bias induced by these procedures, our estimator has the asymptotic behaviour of an oracle estimator where the nuisance parameters are known. In this way, the proposed standard errors are honest in that they reflect the uncertainty that arises from these data-driven model choices.

\subsection{Targeted Learning}\label{sec: tmle}
Like other debiased machine learning strategies, targeted learning also removes plug-in bias based on the efficient influence function, but it does so after first `targeting' the initial nuisance parameter estimates \citep{vanderLaanRubin+2006, vanderlaanrose2011}. It thereby delivers an estimator that is asymptotically equivalent to the earlier defined debiased estimator under the conditions of theorem \ref{theorem2}. However, these estimators may have different finite sample performance \citep{yadlowsky2022}. Especially the last term in (\ref{DML}) may sometimes be unstable because of the density in the denominator. Therefore, we define a retargeted estimator $\widetilde{Q}_\tau(Y\mid A,L)$ which forces
\begin{align}\label{tmlefun}
    \frac{1}{n}\sum_{i=1}^n\left\{A_i-\hat{E}(A_i\mid L_i)\right\}\left[ \frac{\tau-I\{Y_i\leq \widetilde{Q}_\tau(Y_i\mid A_i,L_i)\}}{\hat{f}\{\widetilde{Q}_\tau(Y_i\mid A_i,L_i)\mid A_i,L_i\}} \right] = 0.
\end{align}
This is justified because (\ref{tmlefun}) is an unbiased estimating equation for $Q_\tau(Y\mid A,L)$. If (\ref{tmlefun}) holds and $\hat{Q}_\tau$ is replaced by $\widetilde{Q}_\tau$ in (\ref{DML}), the resulting estimator for $\Psi_\tau$ equals
\begin{align*}
    \frac{1}{n}\sum_{i=1}^n\frac{A_i-\hat{E}(A_i\mid L_i)}{\frac{1}{n}\sum_{i=1}^n\{A_i-\hat{E}(A_i\mid L_i)\}^2}\left[\widetilde{Q}_\tau(Y_i\mid A_i,L_i)- \hat{E}\{\widetilde{Q}_\tau(Y_i\mid A_i,L_i)\mid   L_i\} \right],
\end{align*}
which is a plug-in estimator of estimand (\ref{estimand3}).

Constructing an estimator that satisfies (\ref{tmlefun}) turns out to be rather challenging. For convenience, we will write
\begin{align}\label{weight}
    w(Q_{\tau,i}) = \frac{A_i-\hat{E}(A_i\mid L_i)}{\hat{f}\{Q_\tau(Y_i\mid A_i,L_i)\mid A_i,L_i\}}.
\end{align}
Using the weights defined in (\ref{weight}), we can write equation (\ref{tmlefun}) as
\begin{align*}
    \frac{1}{n}\sum_{i=1}^n w(\widetilde{Q}_{\tau,i})\left[\tau-I\{Y_i\leq \widetilde{Q}_\tau(Y_i\mid A_i,L_i)\} \right] = 0.
\end{align*}
This equation needs to be solved for $\widetilde{Q}_\tau(Y_i\mid A_i,L_i)$. Since it is highly nonlinear, we can only find an approximate solution by updating the initial estimator $\hat{Q}_\tau(Y_i\mid A_i,L_i)$ (e.g.\ obtained by using quantile regression forests) as follows:
\begin{enumerate}
    \item Let $Q_{\tau,i}^{(0)} = \hat{Q}_\tau(Y_i\mid A_i,L_i)$ and $\epsilon_0=0$.
    \item For $k>0$: let $Q_{\tau,i}^{(k)} = Q_{\tau,i}^{(k-1)} + \epsilon_{k-1} w\left(Q_{\tau,i}^{(k-1)}\right)$, with
    \[
    \epsilon_{k-1} = \argmin_\epsilon \left\lvert  \frac{1}{n}\sum_{i=1}^n w\left(Q_{\tau,i}^{(k-1)}\right)\left[\tau-I\left\{Y_i\leq Q_{\tau,i}^{(k-1)}+\epsilon w\left(Q_{\tau,i}^{(k-1)}\right)\right\} \right] \right\rvert .
    \]
    \item Repeat step 2 until the value of
\[
    \left\lvert  \frac{1}{n}\sum_{i=1}^n w\left(Q_{\tau,i}^{(k-1)}\right)\left[\tau-I\left\{Y_i\leq Q_{\tau,i}^{(k-1)}+\epsilon w\left(Q_{\tau,i}^{(k-1)}\right)\right\} \right] \right\rvert 
\]
is no longer decreasing or achieves a certain threshold close to 0.
\end{enumerate}
The resulting estimator for $Q_\tau(Y_i\mid A_i,L_i)$ is denoted $\widetilde{Q}_\tau(Y_i\mid A_i,L_i)$. By now substituting $\hat{Q}_\tau(Y_i\mid A_i,L_i)$ with $\widetilde{Q}_\tau(Y_i\mid A_i,L_i)$ in the estimating equations estimator (\ref{DML}), we obtain a Targeted Maximum Likelihood Estimation (TMLE) estimator
\begin{align}\label{TMLE}
\begin{split}
    \hat{\Psi}_\tau^{TMLE} &= \frac{1}{n}\sum_{i=1}^n\frac{A_i-\hat{E}(A_i\mid L_i)}{\frac{1}{n}\sum_{i=1}^n\{A_i-\hat{E}(A_i\mid L_i)\}^2}\left[\widetilde{Q}_\tau(Y_i\mid A_i,L_i)\right.\\
    &\left.\qquad\qquad- \hat{E}\{\widetilde{Q}_\tau(Y_i\mid A_i,L_i)\mid   L_i\} + \frac{\tau-I\{Y_i\leq \widetilde{Q}_\tau(Y_i\mid A_i,L_i)\}}{\hat{f}\{\widetilde{Q}_\tau(Y_i\mid A_i,L_i)\mid A_i,L_i\}} \right].
\end{split}
\end{align}
The targeting step results in the last term of (\ref{TMLE}) being approximately zero, such that
\begin{align}\label{TMLE2}
    \hat{\Psi}_\tau^{TMLE} \approx \frac{1}{n}\sum_{i=1}^n\frac{A_i-\hat{E}(A_i\mid L_i)}{\frac{1}{n}\sum_{i=1}^n\{A_i-\hat{E}(A_i\mid L_i)\}^2}\left[\widetilde{Q}_\tau(Y_i\mid A_i,L_i)- \hat{E}\{\widetilde{Q}_\tau(Y_i\mid A_i,L_i)\mid   L_i\} \right].
\end{align}
Comparing (\ref{TMLE2}) with (\ref{plugin}), it is clear that we removed (approximately) all bias from the initial plug-in estimator by cleverly targeting the nuisance parameter estimator for $Q_\tau(Y_i\mid A_i,L_i)$. In Appendix \ref{appB}, we give a sketch of a proof that (\ref{TMLE}) and (\ref{TMLE2}) are asymptotically equivalent, even with only one iteration of targeting. Nevertheless, we opt to use (\ref{TMLE}) to mitigate reliance on the accuracy of approximation (\ref{TMLE2}).

The targeted learning estimator with 5-fold cross-fitting can be obtained as follows. After splitting the data into 5 equal parts $I_k$, with $k=1,\ldots,5$, five models are trained for each nuisance parameter, where each time one subset $I_k$ is left out from the data. In particular, a nonparametric model is trained for $E(A\mid L)$, $Q_\tau(Y\mid A,L)$, $E\{Q_\tau(Y\mid A,L)\mid L\}$ and $f\{Q_\tau(Y\mid A,L)\mid A,L\}$ on $I_k^C$ (i.e.\ 80\% of the data). Then, those models are used to estimate the nuisance parameters for subjects belonging to $I_k$. These estimates are then used to calculate the contribution to $\Psi_\tau$ (which is a sample average) for the same subjects. The targeting algorithm described above is then applied to the entire data set. Then, $\Psi_\tau$ is estimated as in (\ref{TMLE}). The standard error can be estimated similarly to the debiased estimator (\ref{DML}), i.e.\ as the square root of 
\begin{align*}
    &\frac{1}{n^2}\sum_{i=1}^n\left(\frac{A_i-\hat{E}(A_i\mid L_i)}{\frac{1}{n}\sum_{i=1}^n\{A_i-\hat{E}(A_i\mid L_i)\}^2}\left[\widetilde{Q}_\tau(Y_i\mid A_i,L_i)- \hat{E}\{\widetilde{Q}_\tau(Y_i\mid A_i,L_i)\mid   L_i\}\right.\right.\\
    &\left.\left.\qquad\qquad + \frac{\tau-I\{Y_i\leq \widetilde{Q}_\tau(Y_i\mid A_i,L_i)\}}{\hat{f}\{\widetilde{Q}_\tau(Y_i\mid A_i,L_i)\mid A_i,L_i\}}- \hat{\Psi}_\tau^{TMLE} \{A_i-E(A_i\mid L_i)\} \right]\right)^2.
\end{align*}

To estimate the nuisance parameter $E\{Q_\tau(Y\mid A,L)\mid L\}$, one would first need to estimate $Q_\tau(Y\mid A,L)$. Then, based on these estimates, one can estimate $E\{Q_\tau(Y\mid A,L)\mid L\}$ by taking the quantile estimates as the outcome for a model that predicts its conditional expectation given $L$. However, when the exposure $A$ is binary, this expectation reduces to
\begin{equation}\label{decomposition}
    E\{Q_\tau(Y\mid A,L)\mid L\} = Q_\tau(Y\mid A=1,L)E(A\mid L) + Q_\tau(Y\mid A=0,L)\{1-E(A\mid L)\}.
\end{equation}
This allows us to estimate this nuisance parameter without additionally modeling the conditional expectation of $Q_\tau(Y\mid A,L)$ given $L$. Indeed, $Q_\tau(Y\mid A,L)$ and $E(A\mid L)$ are nuisance parameters themselves so they need to be modeled anyway. These models can then be used to estimate $Q_\tau(Y\mid A=1,L)$, $Q_\tau(Y\mid A=0,L)$ and $E(A\mid L)$. Furthermore, targeted learning with cross-fitting can easily be applied in this case, as $$\hat{E}\{\widetilde{Q}_\tau(Y\mid A,L)\mid L\} = \widetilde{Q}_\tau(Y\mid A=1,L)\hat{E}(A\mid L) + \widetilde{Q}_\tau(Y\mid A=0,L)\{1-\hat{E}(A\mid L)\}.$$

On the contrary, decomposition (\ref{decomposition}) cannot be made for a continuous exposure. It is then necessary to have an additional model for $E\{Q_\tau(Y\mid A,L)\mid L\}$. However, since the retargeting of $\hat{Q}_\tau(Y\mid A,L)$ is done on the full data set, it is no longer possible to do proper cross-fitting for $E\{Q_\tau(Y\mid A,L)\mid L\}$. In that case, a model to estimate the conditional expectation of $\widetilde{Q}_\tau(Y\mid A,L)$ given $L$ should be trained using only 80\% of the data, but $\widetilde{Q}_\tau(Y\mid A,L)$ would already have been targeted using all data. This makes it no longer possible to train a model for $E\{\widetilde{Q}_\tau(Y\mid A,L)\mid L\}$ without involving the remaining 20\% of the data. 
One way to resolve this problem is to do the targeting in each fold separately. However, forcing equation (\ref{tmlefun}) to hold in every fold does not guarantee that it holds for the entire data set, especially in smaller samples. Alternatively, one could choose not to target the quantile estimator in $E\{Q_\tau(Y\mid A,L)\mid L\}$. However, instead of repeating the above targeting algorithm until convergence, we propose to do step 2 only once, which is justified by the sketched proof in Appendix \ref{appB}. Thus, the resulting retargeted estimator for $Q_\tau(Y\mid A,L)$ will be
\begin{equation*}
    \widetilde{Q}_\tau(Y\mid A,L) = \hat{Q}_\tau(Y\mid A,L) + \hat{\epsilon}  w(\hat{Q}_{\tau,i}),
\end{equation*}
with 
\begin{equation}\label{eps onestep}
\hat{\epsilon} = \argmin_\epsilon \left\lvert  \frac{1}{n}\sum_{i=1}^n w\big(\hat{Q}_{\tau,i}\big)\left[\tau-I\left\{Y_i\leq \hat{Q}_{\tau,i}+\epsilon w\big(\hat{Q}_{\tau,i}\big)\right\} \right] \right\rvert .
\end{equation}
The nuisance parameter $E\{Q_\tau(Y\mid A,L)\mid L\}$ can then be estimated based on the following identity:
\begin{align}\label{epsmodels}
\begin{split}
    E\{\widetilde{Q}_\tau(Y\mid A,L)\mid L\} &= E\{\hat{Q}_\tau(Y\mid A,L) + \hat{\epsilon} w(\hat{Q}_{\tau,i})\mid  L\}\\
    &= E\{\hat{Q}_\tau(Y\mid A,L)\mid L\} + \hat{\epsilon} E\left[\frac{A-\hat{E}(A\mid L)}{\hat{f}\{\hat{Q}_\tau(Y\mid A,L)\mid A,L\}} \mid  L\right].
\end{split}
\end{align}
Therefore, an estimator can be constructed by estimating $E\{\hat{Q}_\tau(Y\mid A,L)\mid L\}$ and $E[\{A-\hat{E}(A\mid L)\}/\hat{f}\{Q_\tau(Y\mid A,L)\mid A,L\}\mid L]$ separately, possibly via cross-fitting. In this way, the value of $\epsilon$ can be learned via (\ref{eps onestep}) after estimating the 2 models in (\ref{epsmodels}) that determine $E\{Q_\tau(Y\mid A,L)\mid L\}$, which allows us to do the retargeting on the entire data set.

\subsection{Parametric quantile regression with variable selection}\label{simplVS}
When using quantile regression, possibly with (stepwise) variable selection, to estimate the nuisance parameter $Q_\tau(Y\mid A,L)$, the debiased estimator (\ref{DML}) and TMLE estimator (\ref{TMLE}) can be further simplified. In particular, assume that $Q_\tau(Y\mid A,L) = \beta_\tau A+ \alpha^T L^*$, with $L^*$ a subset of variables of $L$. This simplifies the efficient influence function of $\Psi_\tau$ to
\begin{align*}
    & \frac{A-E(A\mid L)}{E[\{A-E(A\mid L)\}^2]}\left[\beta \{A - E(A\mid  L)\} + \frac{\tau-I\{Y\leq Q_\tau(Y\mid A,L)\}}{f\{Q_\tau(Y\mid A,L)\mid A,L\}} - \Psi_\tau \{A-E(A\mid L)\} \right].
\end{align*}
The value of $\Psi_\tau$ that makes the sample average of the influence functions zero is then
\begin{align}\label{DMLvs}
\begin{split}
    \hat{\Psi}_\tau &= \hat{\beta}_\tau + \frac{1}{n}\sum_{i=1}^n\frac{A_i-\hat{E}(A_i\mid L_i)}{\frac{1}{n}\sum_{i=1}^n\{A_i-\hat{E}(A_i\mid L_i)\}^2}\left[ \frac{\tau-I\{Y_i\leq \hat{Q}_\tau(Y_i\mid A_i,L_i)\}}{\hat{f}\{\hat{Q}_\tau(Y_i\mid A_i,L_i)\mid A_i,L_i\}} \right].
\end{split}
\end{align}
Here, $\hat{\beta}_\tau$ is the estimate of $\beta_\tau$ obtained by fitting a quantile regression model of $Y$ on $A$ and (a subset of) $L$ via standard software (e.g.\ the R package `quantreg'), possibly with variable selection. The second term eliminates bias due to variable selection mistakes, which is needed to achieve valid post-selection inference.

An asymptotically equivalent (one-step) TMLE estimator is
\begin{align}\label{TMLEvs}
\begin{split}
    \hat{\Psi}_\tau^{TMLE} &= \hat{\beta} + \frac{1}{n}\sum_{i=1}^n\frac{A_i-\hat{E}(A_i\mid L_i)}{\frac{1}{n}\sum_{i=1}^n\{A_i-\hat{E}(A_i\mid L_i)\}^2}\left[\left(\frac{A_i-\hat{E}(A_i\mid L_i)}{\hat{f}\{\widetilde{Q}_\tau(Y_i\mid A_i,L_i)\mid A_i,L_i\}}\right.\right.\\
    &\left.\left.\qquad-\hat{E}\left[\frac{A_i-\hat{E}(A_i\mid L_i)}{\hat{f}\{\widetilde{Q}_\tau(Y_i\mid A_i,L_i)\mid A_i,L_i\}}\mid L_i\right]\right)\hat{\epsilon}
    + \frac{\tau-I\{Y_i\leq \widetilde{Q}_\tau(Y_i\mid A_i,L_i)\}}{\hat{f}\{\widetilde{Q}_\tau(Y_i\mid A_i,L_i)\mid A_i,L_i\}} \right],
\end{split}
\end{align}
where the last term is approximately zero. More details on the construction of this estimator can be found in Appendix \ref{appC}.

\section{What if model (\ref{model}) is wrong?}\label{sec:estimand}
Model (\ref{model}) assumes a linear association between $A$ and the quantiles of $Y$ conditional on $L$, and the absence of $A$-$L$ interactions. Such assumptions are often made for parsimony and better interpretability, for example, to be able to summarize the association of interest into one single number. The downside is that the model may well fail to represent the underlying data-generating mechanism. 
Instead of just fitting this model and drawing inference for $\beta_\tau$ as one would do for a model parameter in standard quantile regression, we have started our analysis by defining a model-free estimand which reduces to $\beta_\tau$ if model (\ref{model}) is correct, but that still captures the association of interest if it is misspecified. The latter would, for instance, be satisfied when the estimand equals some $L$-dependent weighted average of the estimand which one would report for a subset of individuals with given $L$ (e.g.\ in the case of a dichotomous exposure, the difference in $\tau$-quantile of the outcome between individuals with $A=1$ versus $A=0$ and the same $L$). Building on \cite{AssumptionLean}, we will focus on the estimand (\ref{estimand2}) or its equivalent representation (\ref{estimand3}).
For a dichotomous exposure that is independent of $L$ (e.g.\ when $A$ is a binary randomized treatment) this estimand simplifies further to:
\begin{align}\label{estimandrandbin}
    \Psi_\tau=E\left\{Q_\tau(Y\mid A=1,L)-Q_\tau(Y\mid A=0,L)\right\},
\end{align}
which equals an expected difference in $\tau$-quantile of the outcome between individuals with $A=1$ versus $A=0$ and the same level of $L$. The proposed estimators thus remain interpretable in terms of average differences between quantiles, even when model (\ref{model}) is misspecified. This is in stark contrast to (partially linear) quantile regression that does not provide such a guarantee.

With interest in a causal relationship, one can make use of counterfactual outcomes $Y^a$ to denote the outcome that would have been seen, if the individual would have had exposure value $A=a$. Under consistency (i.e.\ $Y^a=Y$) and exchangeability (i.e.\ when $A$ is independent of the counterfactual outcome $Y^a$ to exposure level $a$, $Y^a\indep A\mid L$), the estimand (\ref{estimandrandbin}) reduces to
\begin{align*}
    \Psi_\tau=E\left\{Q_\tau(Y^1\mid L)-Q_\tau(Y^0\mid L)\right\},
\end{align*}
which is an expected difference in $\tau$-quantile of the counterfactual outcome $Y^1$ versus $Y^0$ conditional on $L$.

If $A$ is dichotomous but not randomized, then the estimand reduces to
\begin{align}\label{estimandbin}
    \Psi_\tau=\frac{E[\pi(L)\{1-\pi(L)\}\left\{Q_\tau(Y\mid A=1,L)-Q_\tau(Y\mid A=0,L)\right\}]}{E[\pi(L)\{1-\pi(L)\}]},
\end{align}
with $\pi(L)=\text{pr}(A=1\mid L)$, the propensity score. This is a weighted average of the difference in $\tau$-quantile of the outcome between individuals with $A=1$ versus $A=0$ and the same level of $L$, with weights equal to 
\begin{equation*}\label{weights}
    \frac{\pi(L)\{1-\pi(L)\}}{E[\pi(L)\{1-\pi(L)\}]}.
\end{equation*}
This choice of weights is motivated by settings where interest lies in the causal effect of $A$. One then wishes to upweight those strata where $\Var(A\mid L)=\pi(L)\{1-\pi(L)\}$ is relatively large. This corresponds to giving the highest weight to covariate regions where both exposed and unexposed subjects are found. Moreover, in strata of $L$ where all subjects are exposed (or unexposed), those subjects will not contribute to $\Psi_\tau$. Those strata do not contain pertinent information about the causal effect of $A$ and will be given a weight of zero.

As above, under consistency and conditional exchangeability, (\ref{estimandbin}) reduces to a weighted average of the difference in $\tau$-quantile of $Y^1$ versus $Y^0$ conditional on $L$:
\begin{align*}
    \Psi_\tau=\frac{E\left[\pi(L)\{1-\pi(L)\}\left\{Q_\tau(Y^1\mid L)-Q_\tau(Y^0\mid L)\right\}\right]}{E[\pi(L)\{1-\pi(L)\}]}.
\end{align*}

For exposures that are not necessarily binary, this estimand can be generalized by (\ref{estimand2}), which is constructed as follows. Consider drawing a sample of individuals, and for each, randomly drawing an individual with the same $L$. A population least squares regression of the difference between their conditional quantiles on their difference in exposure values results in (\ref{estimand2}), or equivalently, in (\ref{estimand3}). This is closely related to the idea of matching \citep{rosenbaum1983,rosenbaum1989} based on $L$. 

Further insight can be obtained by considering the case where
\[
    Q_\tau(Y\mid A,L) = \beta_\tau(L) A + \omega_\tau(L),
\]
where the coefficient $\beta_\tau$ is now allowed to be $L$-dependent. By substituting the right-hand side in (\ref{estimand3}), the estimand can easily be shown to reduce to
\[
\Psi_\tau = \frac{E\left\{\Var(A\mid L)\beta_\tau(L)\right\}}{E\left\{\Var(A\mid L)\right\}}.
\]
This is a weighted average of the $L$-dependent coefficients $\beta_\tau(L)$, which thus captures the conditional association of interest.

\section{Simulation experiments}
To investigate the performance of the proposed estimators, we use four simulation experiments. In all experiments, we report Monte Carlo bias and standard deviation (SD), as well as the averaged estimated standard error (SE) and coverage of corresponding 95\% Wald confidence intervals. For all experiments, we show results for 1000 simulation runs for the oracle estimator (Oracle), obtained by fitting a correctly specified standard quantile regression model that includes the exposure $A$ and all necessary confounders, the naive plug-in (Plug-in) estimator (\ref{plugin}), the DML estimator (\ref{DML}) based on quantile random forests without cross-fitting (DML) and with 5-fold cross-fitting (DML-CF), and the targeted learning estimator (\ref{TMLE}) without cross-fitting (TMLE) and with 5-fold cross-fitting (TMLE-CF). The sample size is fixed at $n=500$ unless otherwise stated. The standard error of the plug-in estimator is always calculated similarly as for the debiased machine learning estimator, but removing the last term of (\ref{DML}) from the efficient influence function. More experiment-specific details are provided below.

\subsection{Experiment 1: Homoscedasticity, heteroscedasticity, and continuity}
This experiment consisted of 3 simulation settings, where the data generating mechanism was each time slightly changed. We studied the proposed estimators' behaviour for a binary exposure in the first simulation setting, inspired by a setting used in \cite{Sun2021}. Some minor adjustments were made to prevent the coefficient of determination ($R^2$) from exceeding 0.8, as this is often unrealistic. In setting 1, we considered a homoscedastic outcome where the exposure effect had the same magnitude for all quantiles. Thus, there was a distribution shift between exposed and unexposed individuals. In setting 2, we adjusted the setting slightly so that the outcome variability differed for both exposure groups. This implies that the exposure effect is different for different values of $\tau$. In setting 3, the exposure was continuous. The results are shown in Table \ref{table1}.

In the first setting, the data was generated as follows. Four confounding variables were generated according to
\begin{align}\label{confounders}
\begin{pmatrix}
    L_1\\
    L_2\\
    L_3\\
    L_4
\end{pmatrix} 
\sim \mathcal{N} \left\{ 
\begin{pmatrix}
    0\\
    0\\
    0\\
    0
\end{pmatrix} ,
\begin{pmatrix}
    1 & 0.5 & 0.2 & 0.3\\
    0.5 & 1 & 0.7 & 0\\
    0.2 & 0.7 & 1 & 0\\
    0.3 & 0 & 0 & 1
\end{pmatrix}
\right\}.
\end{align}
The dichotomous exposure $A$ follows a Bernoulli distribution conditional on $L=(L_1,L_2,L_3,L_4)$, with true propensity score given by $\text{pr}(A=1\mid L) = \text{expit}(-0.5+0.2 L_1 - 0.4 L_2 -0.4 L_3+0.2 L_4)$.

In the homoscedastic setting, the outcome $Y$ was generated as 
\begin{align}\label{outcome}
Y = 1+A+\sin(L_1)+L_2^2+L_3+L_4+L_3 L_4+\epsilon,
\end{align}
where $\epsilon$ follows a Gamma($k,\theta$) distribution with shape parameter $k=1$ and scale parameter $\theta=2$ and is independent from $A$ and $L$. Under heteroscedasticity, we chose $\epsilon\sim\text{Gamma}(1,2+A)$.

For all estimators except Oracle, quantile random forests (via the `grf' package described in \cite{athey2019generalized}) were used to learn $Q_\tau(Y\mid A,L)$. To learn $E(A\mid L)$, SuperLearner was used; this is an ensemble learner described in \cite{vanderLaanSL2007}. The SuperLearner library included generalized linear models with and without elastic net regularization, generalized additive models, and random forests. These estimators were also used to estimate $E\{Q_\tau(Y\mid A,L)\mid L\}$ based on (\ref{decomposition}). To obtain the nuisance parameter $f\{Q_\tau(Y\mid A,L)\mid A,L\}$, we estimated the conditional density of $Y$ given $A$ and $L$ via the `fk\_density' function from the R package `FKSUM' \citep{fksum1,fksum2}. This function gives a fast approximation of a univariate density based on recursive computations in evaluation points of choice. Since our interest lies in estimating a conditional density function, we estimated the density of $Y-Q_\tau(Y\mid A,L)$ and evaluated it at 0. This estimation procedure assumes that the distribution of $Y-Q_\tau(Y\mid A,L)$ is independent of $A$ and $L$. 

\begin{table}[h]
\centering 
\addtolength{\leftskip} {-4cm}
\addtolength{\rightskip}{-4cm}
\begin{tabular}{clcccccccccccccc}
 &  &   \multicolumn{4}{c}{$\tau = 0.5$} &  & \multicolumn{4}{c}{$\tau = 0.75$} &  & \multicolumn{4}{c}{$\tau = 0.9$} \\ 
\multirow{-2}{*}{\textbf{Setting}} & \multirow{-2}{*}{\textbf{estimator}}  & \textbf{bias} & \textbf{SD} & \textbf{SE} & \textbf{Cov} &  & \textbf{bias} & \textbf{SD} & \textbf{SE} & \textbf{Cov} &  & \textbf{bias} & \textbf{SD} & \textbf{SE} & \textbf{Cov} \\ 
 & Oracle   & -0.17 & 19 & 20 & 96.6 &  & -0.51 & 33 & 35 & 96.3 &  & -1.1 & 56 & 60 & 96.0 \\
 & Plug-in   & -70 & 12 & 1.5 & 0.1 &  & -72 & 14 & 1.8 & 0.5 &  & -64 & 22 & 3.6 & 1.6 \\
 & DML   & -15 & 19 & 19 & 85.9 &  & -35 & 29 & 25 & 63.7 &  & -56 & 32 & 22 & 33.0 \\
 & DML-CF   & -16 & 18 & 18 & 84.2 &  & -24 & 28 & 25 & 79.2 &  & -31 & 43 & 35 & 77.4 \\
 & TMLE   & -41 & 21 & 19 & 45.2 &  & -71 & 32 & 26 & 27.0 &  & -110 & 28 & 45 & 12.6 \\
 \multirow{-6}{*}{1} & TMLE-CF &  1.2 & 22 & 25 & 97.2 &  & 2.8 & 39 & 37 & 93.5 &  & 14 & 68 & 63 & 91.4 \\
 & Oracle   & -0.59 & 26 & 26 & 93.7 &  & -2.7 & 45 & 46 & 93.6 &  & -6.8 & 75 & 78 & 94.4 \\
 & Plug-in   & -89 & 26 & 2.7 & 0.6 &  & -120 & 42 & 4.1 & 1.1 &  & -160 & 60 & 6.8 & 1.5 \\
 & DML & -15 & 27 & 22 & 83.0 &  & -45 & 46 & 30 & 57.8 &  & -120 & 66 & 28 & 16.5 \\
 & DML-CF   & -20 & 26 & 20 & 75.6 &  & -43 & 43 & 30 & 61.7 &  & -78 & 69 & 42 & 49.6 \\
 & TMLE   & -45 & 27 & 23 & 49.4 &  & -76 & 49 & 32 & 38.0 &  & -195 & 67 & 34 & 4.0 \\
\multirow{-6}{*}{2} & TMLE-CF & 5.0 & 29 & 28 & 93.9 &  & 10 & 50 & 42 & 88.7 &  & 13 & 88 & 70 & 87.6 \\
 & Oracle   & -0.13 & 3.5 & 3.6 & 95.6 &  & -0.29 & 5.8 & 6.3 & 95.8 &  & 0.10 & 10 & 11 & 94.6 \\
 & Plug-in   & -17 & 6.4 & 1.6 & 0.5 &  & -25 & 7.9 & 1.9 & 0.2 &  & -39 & 11 & 2.1 & 0.0 \\
 & DML   & -7.9 & 6.8 & 3.6 & 45.1 &  & -16 & 8.5 & 4.6 & 21.6&  & -34 & 11 & 4.3 & 1.7 \\
 & DML-CF   & -6.2 & 3.9 & 3.1 & 48.5 &  & -10 & 6.0 & 4.4 & 40.0 &  & -19 & 9.1 & 6.2 & 23.0\\
 & TMLE   & -11 & 7.1 & 3.6 & 29.3 &  & -18 & 9.3 & 4.9 & 18.9 &  & -41 & 12 & 5.2 & 1.6 \\
\multirow{-6}{*}{3} & TMLE-CF & -1.1 & 4.4 & 4.2 & 92.9 &  & 0.016 & 7.7 & 6.3 & 90.0 &  & 1.2 & 14 & 10 & 85.3
\end{tabular}
\caption{Simulation results for experiment 1: sample size $n=500$, quantile $\tau$, Monte Carlo bias, Monte Carlo standard deviation (SD), average of the influence function based standard errors (SE) and coverage of 95\% Wald confidence intervals (Cov). All values have been multiplied by $10^2$.}
\label{table1}
\end{table}

Estimating the conditional density could alternatively be done via the `haldensify' package \citep{hejazi2022efficient,hejazi2022haldensify-rpkg,hejazi2022haldensify-joss}, which uses the highly adaptive lasso (HAL) \citep{HAL2016,vanderLaanHAL2017}. This conditional density could then be evaluated in the estimated quantile without making any additional assumptions. However, this method is much more time-consuming and did not outperform the simpler estimation method. The corresponding results can be found in Appendix \ref{appD}.

In the third setting of this experiment, we studied the proposed estimators' finite sample behaviour for a continuous exposure. Here, $A$ was generated according to a normal distribution $\mathcal{N}(\mu,2^2)$, with $\mu = -0.5+L_1 - 2 L_2 -2 L_3+L_4$.
The outcome $Y$ was generated from (\ref{outcome}), where $\epsilon\sim\text{Gamma}(1,4)$ is independent from $A$ and $L$. The nuisance parameters were all estimated as before, except for $E\{Q_\tau(Y\mid A,L)\mid L\}$ as the decomposition (\ref{decomposition}) is no longer possible for continuous $A$. To estimate this parameter, we again used SuperLearner with the same library as before, using the random forest estimate of $Q_\tau(Y\mid A,L)$ as outcome. For the TMLE-CF estimator, the nuisance parameters described at the end of Section \ref{sec: tmle} are also estimated using SuperLearner.

The simulations all demonstrate the poor performance of the naive plug-in estimator. It has a high bias and the standard error is poorly estimated and overly optimistic, leading to a coverage of almost 0. When comparing the debiased machine learning and targeted learning estimators with versus without cross-fitting, the advantage of cross-fitting becomes clearly visible. In general, the estimators with cross-fitting perform better, especially for $\tau=0.9$ and in the setting with continuous exposure. The use of cross-fitting leads to a bias reduction and to estimated standard errors that better approximate the Monte Carlo standard deviation, resulting in a better coverage. For $\tau=0.5$, we sometimes see the estimators without cross-fitting performing slightly better in terms of bias, likely because the nuisance parameters are estimated on larger sample sizes. When comparing DML-CF with TMLE-CF, we often observe the latter having a larger SD. However, in terms of bias and coverage, we always observe targeted learning outperforming the other estimator. As expected, all estimators perform worse for more extreme quantiles. 

\subsection{Experiment 2: Extreme propensity scores}
In this experiment, we generated a setting where the propensity scores could take extreme values (i.e.\ values close to 0 or 1), leading to little overlap between exposed and unexposed individuals. Therefore, we generated a dichotomous exposure $A$, with $\text{pr}(A=1\mid L) = \text{expit}(-0.5+0.2 L_1 - 0.4 L_2 -0.4 L_3+0.2 L_4 +0.5 L_1^2-0.5 L_2^2+0.5 L_3L_4)$,
where $L$ follows the same multivariate normal distribution (\ref{confounders}) as in experiment 1. To gain insight into these propensity scores, histograms are provided in Appendix \ref{appD}. The outcome was generated according to (\ref{outcome}), with $\epsilon\sim\text{Gamma}(1,3)$. We evaluated the same estimators as in experiment 1 and included a naive (misspecified) quantile regression model (QR) that only included main effects. The results are shown in Table \ref{tab:extrprop}. 

\begin{table}[h]
\centering 
\addtolength{\leftskip} {-4cm}
\addtolength{\rightskip}{-4cm}
\begin{tabular}{clcccccccccccccc}
 \textbf{Sample}&  &   \multicolumn{4}{c}{$\tau = 0.5$} &  & \multicolumn{4}{c}{$\tau = 0.75$} &  & \multicolumn{4}{c}{$\tau = 0.9$} \\  
\textbf{size} & \multirow{-2}{*}{\textbf{estimator}}  & \textbf{bias} & \textbf{SD} & \textbf{SE} & \textbf{Cov} &  & \textbf{bias} & \textbf{SD} & \textbf{SE} & \textbf{Cov} &  & \textbf{bias} & \textbf{SD} & \textbf{SE} & \textbf{Cov} \\
 & Oracle   & -0.30 & 29 & 30 & 94.6 &  & 0.68 & 51 & 52 & 94.8 &  & -3.0 & 92 & 91 & 94.2 \\
  & QR   & 16 & 36 & 36 & 92.3 &  & 33 & 55 & 58 & 92.7 &  & 44 & 91 & 95 & 94.1 \\
 & Plug-in   & -72 & 14 & 1.5 & 0.0 &  & -74 & 17 & 2.0 & 0.4 &  & -62 & 36 & 4.7 & 5.8 \\
 & DML   & -12 & 32 & 33 & 93.1 &  & -37 & 44 & 41 & 82.0 &  & -56 & 54 & 35 & 54.0 \\
 & DML-CF   & -21 & 27 & 25 & 85.3 &  & -35 & 40 & 36 & 80.5 &  & -39 & 63 & 46 & 74.7 \\
 & TMLE   & -59 & 33 & 33 & 55.0 &  & -97 & 45 & 43 & 35.1 &  & -110 & 97 & 45 & 31.8 \\
 \multirow{-6}{*}{$500$} & TMLE-CF &  -6.5 & 32 & 34 & 95.4 &  & -13 & 55 & 52 & 92.9 &  & -14 & 100 & 87 & 91.2 \\
 & Oracle   & -0.089 & 20 & 21 & 95.6 &  & 0.37 & 36 & 37 & 95.0 &  & -3.0 & 61 & 63 & 95.3 \\
  & QR   & 17 & 25 & 26 & 91.1 &  & 33 & 40 & 40 & 86.4 &  & 43 & 64 & 65 & 90.8 \\
 & Plug-in   & -66 & 12 & 1.1 & 0.0 &  & -69 & 16 & 1.5 & 0.2 &  & -60 & 27 & 3.2 & 2.1 \\
 & DML & -14 & 21 & 22 & 90.5 &  & -34 & 32 & 29 & 74.3 &  & -57 & 36 & 23 & 34.0 \\
 & DML-CF   & -22 & 19 & 17 & 73.6 &  & -32 & 29 & 24 & 68.1 &  & -41 & 43 & 29 & 59.7 \\
 & TMLE   & -59 & 22 & 22 & 24.1 &  & -100 & 33 & 30 & 9.7 &  & -130 & 86 & 29 & 17.2 \\
\multirow{-6}{*}{$1000$} & TMLE-CF & -6.8 & 22 & 23 & 94.3 &  & -9.5 & 39 & 37 & 93.0 &  & -14 & 69 & 62 & 90.7 
\end{tabular}
\caption{Simulation results for experiment 2: sample size $n$, quantile $\tau$, Monte Carlo bias, Monte Carlo standard deviation (SD), average of the influence function based standard errors (SE) and coverage of 95\% Wald confidence intervals (Cov). All values have been multiplied by $10^2$.}
\label{tab:extrprop}
\end{table}

The proposed estimators show similar behaviour in this experiment as in the first experiment. The best performance is again achieved by the TMLE-CF estimator. It clearly performs better than a naive quantile regression, particularly in terms of bias. When increasing the sample size from 500 to 1000, TMLE-CF does not significantly change performance. However, QR's bias does not decrease, unlike its standard error, leading to a decrease in coverage of 95\% confidence intervals. The TMLE estimator also has a large bias, which is likely caused by overfitting since the bias of TMLE-CF is much smaller. As in the first experiment, cross-fitting provides a major improvement.

\subsection{Experiment 3: Randomized dichotomous exposure}
In this experiment, we first chose the exposure to be randomized with $\text{pr}(A=1)=0.5$, independently from all other variables. Confounders follow the same multivariate normal distribution (\ref{confounders}) as in experiment 1. The outcome was generated from (\ref{outcome}), again with $\epsilon\sim\text{Gamma}(1,2)$. We compared our estimators with 2 estimators recently proposed in \cite{athey2021semiparametric}. Both estimators focus on comparing marginal quantiles $Q_\tau(Y^a)$ of counterfactual distributions $Y^a$, $a=0,1$. They are designed for a constant quantile treatment effect between 2 groups, meaning that there is a location shift between the outcome distribution of exposed versus unexposed subjects. \cite{athey2021semiparametric} estimate so-called weighted average quantile treatment effects
\begin{align}\label{waqte}
    \int_0^1\left\{ Q_\tau(Y^1)-Q_\tau(Y^0) \right\}\,\mathrm{d}W(\tau),
\end{align}
for certain weights $W(\tau)$ that integrate to one, and $W(0)=0$, $W(1)=1$. The first estimator (IFB) is an influence function based estimator, derived from the efficient influence function of (\ref{waqte}). It is a semiparametric efficient estimator for (\ref{waqte}). The second estimator (WAQ) is a weighted average quantile estimator. It is a plug-in estimator for (\ref{waqte}):
\begin{align*}
    \int_0^1\left\{ \hat{Q}_\tau(Y^1)-\hat{Q}_\tau(Y^0) \right\}w_f(\tau)\,\mathrm{d}\tau,
\end{align*}
where the weights $w_f(\tau)$ are such that they provide an efficient L-estimator under the location shift assumption and the assumption that the density $f$ of the counterfactual outcome is known (see \cite{athey2021semiparametric} for details). 
Since these estimators assume a location shift, they are only implemented for estimating the median. The results are shown in Table \ref{tab: exp3} (left).

For the proposed estimators we see similar results as in experiment 1. Targeted learning with cross-fitting clearly outperforms the other estimators with a bias of 0.013 and coverage of 95.1\%. Despite our proposal being assumption-lean, it still performs slightly better than the estimators defined in \cite{athey2021semiparametric}, despite all assumptions being met in this simulation experiment.

\begin{table}[h]
\centering
\begin{tabular}{cccccccccccc}
\textbf{estimator}   & \textbf{bias} & \textbf{SD} & \textbf{SE} & \textbf{Cov} &&&&&&& \\
Oracle & 0.0071 & 0.18 & 0.19 &  95.5 &&&&&&&    \\
Plug-in & -0.70 & 0.12 & 0.011 &  0.0  &&&&&&&  \\
DML & -0.16 & 0.20 & 0.18 &  80.3    &&\textbf{estimator}   & \textbf{bias} & \textbf{SD} & \textbf{SE} & \textbf{Cov}  \\
DML-CF & -0.18 & 0.19 & 0.17 &  76.1  &&TMLE-CF & 0.050 & 0.29 & 0.28  &  93.9  \\
TMLE & -0.42 & 0.21 & 0.18 &  37.6  &&IFB & -0.21 & 0.23 & 0.25 &  88.0  \\
TMLE-CF & 0.013 & 0.23 & 0.23  &  95.1 &&WAQ & -0.19 & 0.25 & 0.25 & 88.5 \\
IFB & 0.016 & 0.22 & 0.23 &  96.4 &&&&&&&\\
WAQ & 0.018 & 0.22 & 0.23 & 95.8 
\end{tabular}
\caption{Simulation results for (left) randomized dichotomous exposure and (right) setting 2 of experiment 1: sample size $n=500$, quantile $\tau=0.5$, Monte Carlo bias, Monte Carlo standard deviation (SD), average of the influence function based standard errors (SE) and coverage of 95\% Wald confidence intervals (Cov).}
\label{tab: exp3}
\end{table}

However, when applying the two estimators from \cite{athey2021semiparametric} in the heteroscedasticity setting from experiment 1, we observe a much worse performance (see Table \ref{tab: exp3} (right)). Here the location shift assumption is violated, leading to an inflated bias. As expected, our proposal outperforms their estimators. The results of the TMLE-CF estimator are taken from Table \ref{table1} and included in Table \ref{tab: exp3} (right) for comparison.

\subsection{Experiment 4: Variable selection in higher dimensions}
In this simulation experiment, we considered a covariate vector of dimension 50, based on the data-generating mechanism in \cite{belloni2013}. In particular, we generated a 50-dimensional covariate vector $L\sim\mathcal{N}(0,\Sigma)$, where $\Sigma$ was an autoregressive correlation matrix with correlation parameter 0.5. A continuous exposure was generated according to a normal distribution with mean $\sum_{k=1}^{10}L_k/k$ and unit residual variance. The outcome $Y$ was generated from $Y = A + \sum_{k=1}^{5}L_k/k + \sum_{k=11}^{15}L_k/(k-10) + \epsilon$, with $\epsilon\sim\mathcal{N}(0,2^2)$ independent from $A$ and $L$. Here, we also compare sample size $n=250$ with $n=500$.

In this experiment, we compare the performance of a quantile regression with AIC-based backward stepwise variable selection (QRvs) with the TMLE estimator defined in (\ref{TMLEvs}) with 5-fold cross-fitting (TMLEvs-CF). The results are shown in Table \ref{tab: exp4}. 

Appendix \ref{appD} contains the entire table with additional results for the estimators used in experiment 1, the DML estimator defined in (\ref{DMLvs}) where quantile random forests are replaced by quantile regression with AIC-based backward stepwise variable selection and the TMLE estimator defined in (\ref{TMLEvs}) without cross-fitting.

\begin{table}[h]
\centering
\addtolength{\leftskip}{-2cm}
\addtolength{\rightskip}{-2cm}
\begin{tabular}{cccccccccccc}
& & & \multicolumn{4}{c}{$\mathbf{n=250}$} &  & \multicolumn{4}{c}{$\mathbf{n=500}$} \\ 
\multirow{-2}{*}{\textbf{Quantile}} & \multirow{-2}{*}{\textbf{estimator}} &  & \textbf{bias} & \textbf{SD} & \textbf{SE} & \textbf{Cov} &  & \multicolumn{1}{c}{\textbf{bias}} & \multicolumn{1}{c}{\textbf{SD}} & \multicolumn{1}{c}{\textbf{SE}} & \multicolumn{1}{c}{\textbf{Cov}} \\
\multirow{3}{*}{$\tau=0.5$} & Oracle && 0.66 & 15 & 16 & 96.0 &  &  0.25 & 10 & 11 & 95.7     \\
 & QRvs && 2.4 & 17 & 16 & 91.9 &  &  0.36 & 12 & 11 & 92.8      \\
 & TMLEvs-CF && 1.3 & 20 & 19 & 94.3 &  &  -0.32 & 13 & 12 &   94.8     \\
\multirow{3}{*}{$\tau=0.75$} & Oracle && -0.016 & 16 & 18 & 96.7 &  &  -0.54 & 12 & 12 &   95.0   \\
 & QRvs && -1.4 & 19 & 17 & 91.3 &  &  0.0095 & 13 & 12 &  93.2     \\
 & TMLEvs-CF && 0.52 & 21 & 21 & 94.0 &  &  -0.28 & 14 & 13 &   93.5     \\
\multirow{3}{*}{$\tau=0.9$} & Oracle && -0.88 & 21 & 22 & 95.7 &  &  0.39 & 14 & 15 &  95.8  \\
 & QRvs && 0.065 & 23 & 20 & 89.7 &  &  0.094 & 16 & 14 & 90.8      \\
 & TMLEvs-CF && 0.21 & 24 & 24 & 94.4 &  &  -0.18 & 16 & 16 &   93.7  
\end{tabular}
\caption{Simulation results for variable selection: sample size $n$, quantile $\tau$, Monte Carlo bias, Monte Carlo standard deviation (SD), average of the influence function based standard errors (SE) and coverage of 95\% Wald confidence intervals (Cov). All values have been multiplied by $10^2$.}
\label{tab: exp4}
\end{table}    

We observe that the standard quantile regression with variable selection (QRvs) performs well in terms of bias. However, there is a slight underestimation of the standard errors, leading to a coverage slightly below 95\%. On the contrary, the TMLE estimator achieves a coverage of approximately 94\%. As expected, its standard errors are more accurate because it takes the excess uncertainty of the variable selection procedure into account.

\section{Data analysis}\label{sec:data}
To illustrate our proposal, we reanalyze an observational study to estimate annual health care and lost productivity costs associated with excess weight among the adult population in Belgium, using national health data.
\cite{gorasso2022health} estimated the average difference in health care costs between individuals with underweight (BMI $< 18.5$) and normal weight ($18.5\leq$ BMI $<25$) versus overweight ($25\leq$ BMI $<30$) and obesity (BMI $\geq 30$) using a linear main effect model. They used a double-selection approach to select potential confounders \citep{doubleselection} and estimated a population level effect using g-computation \citep{robinsGcomp}. Candidate variables included age groups, gender, household educational level, household level of income, and some behavioral risk factors such as smoking, alcohol misuse, unhealthy eating behavior, and physical inactivity.

Since health care costs are right-skewed, the average effect of overweight and obesity could be highly influenced by extreme costs. Therefore, analyzing quantiles could add more insight. We will use the proposed TMLE estimator (\ref{TMLEvs}) with 5-fold cross-fitting to assess the conditional association between health care cost and body weight (dichotomized as underweight and normal weight vs.\ overweight and obesity), considering the same set of candidate baseline variables as \cite{gorasso2022health}. This estimator makes use of parametric quantile regression with AIC-based backward stepwise variable selection. We will compare the results with a traditional parametric quantile model that uses the same variable selection procedure. All analyses include sampling weights to adjust for the multi-stage sampling design \citep{bhis}.

\begin{figure}[h]
    \centering
    \includegraphics[width=1\textwidth]{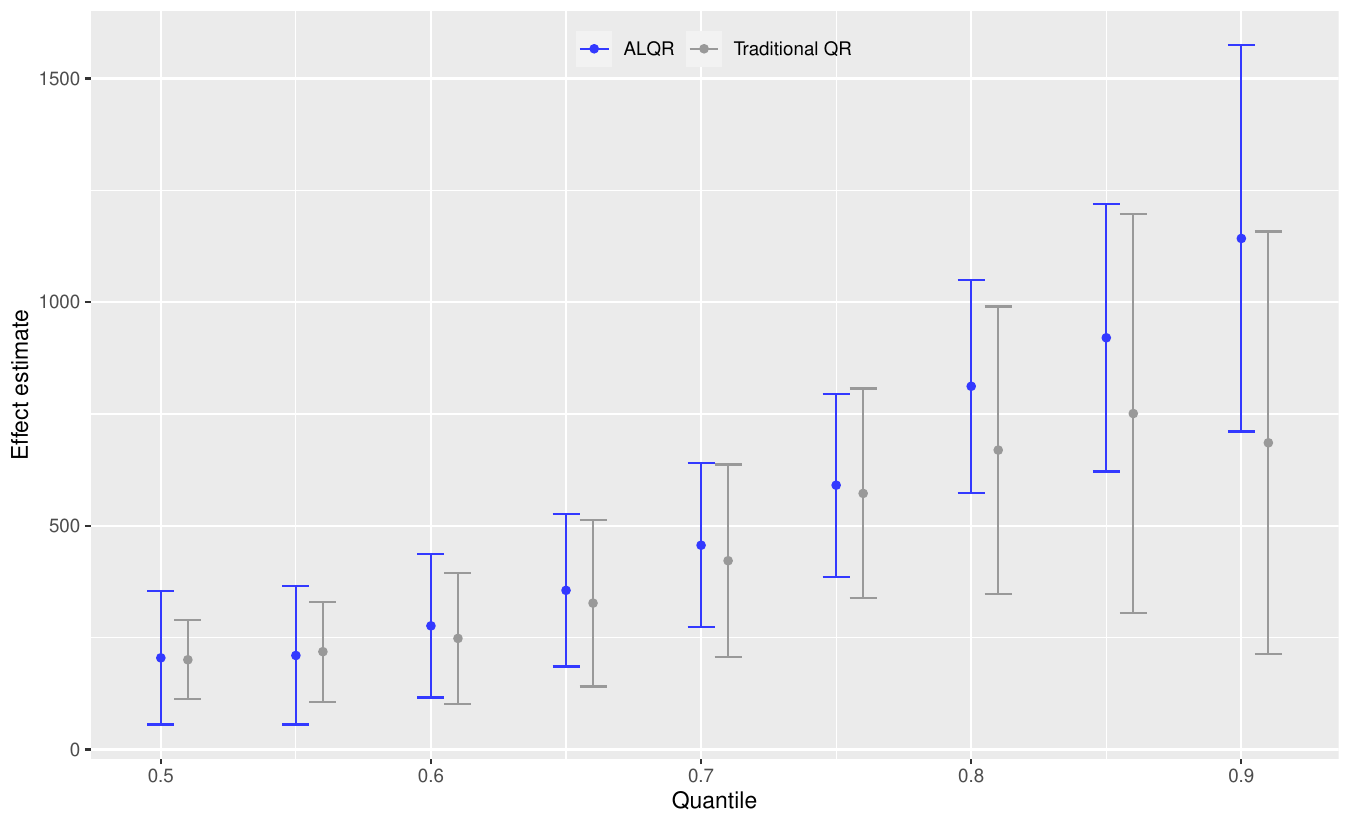}
    \caption{Effect estimate of overweight and obesity on annual health care costs per capita, with 95\% confidence intervals for the proposed method (ALQR) and a traditional quantile regression approach (Traditional QR).}
    \label{fig: sciensano}
\end{figure}

With access to complete data of 3925 individuals, the traditional parametric quantile regression with stepwise variable selection delivers an estimate of the median incremental health care cost per capita of a high BMI of \euro 187.70 (SE: 53.95) and a 90th percentile equal to \euro 521.13 (SE: 259.05). The proposed TMLE approach delivers an estimated median incremental health care cost per capita of excess body weight of \euro 205.09 (SE: 75.94) and an estimate of \euro 1142.42 (SE: 220.15) for the 90th percentile. The fact that our proposal yields different results may be caused by misspecification of the parametric quantile regression model, which did not include interactions involving BMI class. For instance, there is a significant interaction between BMI class and education level ($p=0.0023$). Results for more quantiles are shown in Fig.\ \ref{fig: sciensano}. We observe that the difference in annual health care cost for high versus low body weight is much larger when looking at higher quantiles of the conditional outcome distribution. A sensitivity analysis, which examines additional variability for extreme quantiles related to randomness in the cross-fitting, is included in Appendix \ref{appE}.

\section{Discussion}
In this article, inspired by \cite{AssumptionLean}, we have focused on providing a scalar quantile-based summary of the conditional association between an outcome and an exposure, given covariates. Our results are primarily of interest in settings where averages are less informative and interest rather lies in median (or other quantile) effects (e.g.\ for heavy-tailed or skewed data). The main motivation to focus on conditional quantiles is because marginal quantiles of counterfactual outcomes often demand strong extrapolations by considering a hypothetical scenario where everyone got (or did not get) treatment. This is problematic when some covariate groups are highly unlikely to get treatment. We accommodate this by letting those covariate groups contribute little to our estimand via a very small weight, as is also common in standard regression methods. These groups contain very little information on the exposure effect and there may often be less interest in the effect for those individuals. However, we faced challenges because of the conditional density $f\{Q_\tau(Y\mid A,L)\mid A,L\}$ appearing as a nuisance parameter in the estimator. Estimation of this density is generally a difficult task. Like \cite{nnplqr}, we assume homoscedastic residuals to estimate it.
The `FKSUM' estimator for instance assumes this conditional density to be equal for all subjects. One way to sidestep this assumption is to estimate the (marginal) distribution of $\{Y-E(Y\mid A,L)\}\Var(Y\mid A,L)^{-1/2}$ and then transform an estimate of this distribution to estimate $f\{Q_\tau(Y\mid A,L)\mid A,L\}$. However, this approach resulted in an extremely unstable behaviour of our estimator for $\Psi_\tau$ (not shown) because of estimates of this density close to zero.
The alternative HAL estimator \citep{HAL2016,vanderLaanHAL2017} is more flexible, and does not suffer from instability, but is computationally very expensive. Since the performance of the TMLE estimator did not significantly improve when using the latter density estimator (even in the absence of a location shift), we opted for the simpler one. A broader exploration of different (conditional) density estimators may lead to further improvements in the proposed estimator.
Unfortunately, the existing literature on inverse density weighting is not directly applicable \citep{huling2023}, because we need to evaluate the density in a quantile instead of in the observed outcomes. In future work, we will examine if Auto-DML \citep{chernozhukov2020riesz,chernozhukov2022autodml} could offer a solution to directly estimate the inverse density rather than first estimating the density and then taking the reciprocal. 

From the simulation results for more extreme quantiles, it was often observed that the average estimated standard error is less than the Monte Carlo standard deviation of the estimator. This could be the result of ignoring the randomness in cross-fitting, for which rank-transformed subsampling \citep{guo2023rank} could offer improvements.

Our proposed estimators allow for estimating the nuisance parameters through data-adaptive techniques. To estimate $Q_\tau(Y\mid A,L)$, we utilized either quantile regression forests or parametric quantile regression with variable selection. A potential improvement could be achieved by using an ensemble learner, e.g.\ by constructing an ensemble through the R package `SuperLearner', including the aforementioned methods in its library, potentially supplemented with other estimation methods (e.g.\ Lasso penalized quantile regression). In this way, the user no longer needs to manually choose which algorithm to use. However, to our knowledge such ensemble learners for quantile estimation are currently unavailable.

Finally, it could be useful to generalize the estimand (\ref{estimand3}) so that it includes a link function and replace model (\ref{model}) by $g\{Q_\tau(Y\mid A,L)\} = \beta_\tau A + \omega_\tau (L)$, with link function $g$. More details about this proposal can be found in Appendix \ref{appF}.

\bibliographystyle{unsrtnat}
\bibliography{references}

\newpage
\appendix
\noindent\huge{\textbf{Appendices}}
\normalsize
\renewcommand{\thesection}{\Alph{section}}
\section{Appendix A} \label{appA}
\subsection{Proof that (\ref{estimand2}) reduces to $\beta_\tau$ if model (\ref{model}) holds.}
Model (\ref{model}) states that $Q_\tau(Y|A,L) = \beta_\tau A + \omega_\tau(L)$. Substituting $Q_\tau(Y|A,L)$ and $Q_\tau(Y|A^*,L)$ in (\ref{estimand2}) leads to 
\begin{align*}
    \Psi_\tau &= \frac{E[(A-A^*)\{Q_\tau(Y|A,L)-Q_\tau(Y|A^*,L)\}]}{E\{(A-A^*)^2\}}\\
    &= \frac{E[(A-A^*)\{\beta_\tau A + \omega_\tau(L)-\beta_\tau A^* + \omega_\tau(L)\}]}{E\{(A-A^*)^2\}}\\
    &= \frac{E[(A-A^*)\{\beta_\tau (A - A^*)\}]}{E\{(A-A^*)^2\}}\\
    &= \beta_\tau.
\end{align*}

\subsection{Proof that the estimands (\ref{estimand2}) and (\ref{estimand3}) are equivalent}
We start from the numerator of (\ref{estimand2}):
\begin{align*}
    &E[(A-A^*)\{Q_\tau(Y|A,L)-Q_\tau(Y|A^*,L)\}]\\
    &= E[\{A-E(A|L)+E(A|L)-A^*\}\{Q_\tau(Y|A,L)-Q_\tau(Y|A^*,L)\}]\\
    &= E[\{A-E(A|L)\}\{Q_\tau(Y|A,L)-Q_\tau(Y|A^*,L)\}] + E[\{A^*-E(A|L)\}\{Q_\tau(Y|A^*,L)-Q_\tau(Y|A,L)\}]\\
    &= 2E[\{A-E(A|L)\}Q_\tau(Y|A,L)] - E[\{A-E(A|L)\}Q_\tau(Y|A^*,L)] - E[\{A^*-E(A^*|L)\}Q_\tau(Y|A,L)].
\end{align*}
We made use of the fact that $A$ and $A^*$ follow the same distribution conditional on $L$. Moreover, the last two terms are zero because of the (assumed) conditional independence of $A$ and $A^*$ given $L$:
\begin{align*}
    E[\{A-E(A|L)\}Q_\tau(Y|A^*,L)] &= E\big(E[\{A-E(A|L)\}Q_\tau(Y|A^*,L)|L]\big)\\
    &= E\big(E[\{A-E(A|L)\}|L]E\{Q_\tau(Y|A^*,L)|L\}\big)\\
    &= E\big[0E\{Q_\tau(Y|A^*,L)|L\}\big]\\
    &= 0.
\end{align*}
The numerator of (\ref{estimand3}) reduces to
\[
E\left(\{A-E(A|L)\}\left[Q_\tau(Y|A,L)-E\{Q_\tau(Y|A,L)|L\}\right]\right) = E\left[\{A-E(A|L)\}Q_\tau(Y|A,L)\right].
\]
Thus, the numerator of (\ref{estimand2}) is equal to 2 times the numerator of (\ref{estimand3}).
Now we show that also the denominator of (\ref{estimand2}) is equal to 2 times the denominator of (\ref{estimand3}), such that (\ref{estimand2}) and (\ref{estimand3}) are equivalent:
\begin{align*}
    E\{(A-A^*)^2\} &= E[\{A-E(A|L)+E(A|L)-A^*\}^2]\\
    &= E[\{A-E(A|L)\}^2+\{E(A|L)-A^*\}^2-2\{A-E(A|L)\}\{A^*-E(A|L)\}]\\
    &= 2E[\{A-E(A|L)\}^2] - 2E[\{A-E(A|L)\}\{A^*-E(A|L)\}]\\
    &= 2E[\{A-E(A|L)\}^2] - 2E\big(E[\{A-E(A|L)\}\{A^*-E(A|L)\}|L]\big)\\
    &= 2E[\{A-E(A|L)\}^2] - 2E\big(E[\{A-E(A|L)\}|L]E[\{A^*-E(A|L)\}|L]\big)\\
    &= 2E[\{A-E(A|L)\}^2].
\end{align*}

\newpage
\section{Appendix B}\label{appB}
\subsection{Proof of Theorem \ref{theorem1}}
First note that estimand (\ref{estimand3}) can be equivalently written as
\begin{equation}\label{estimand}
\Psi_\tau = \frac{E[\{A-E(A|L)\}Q_\tau(Y|A,L)]}{E[\{A-E(A|L)\}^2]}.
\end{equation}
We will use (\ref{estimand}) to calculate the efficient influence function of (\ref{estimand3}). Furthermore, we follow the strategy described in \cite{hines2022} as well as the notation. Therefore, consider the one-dimensional parametric submodel $\mathcal{P}_t = (1-t)\mathcal{P}+t\widetilde{\mathcal{P}}$, where $\mathcal{P}$ is the true data-generating law and $\widetilde{\mathcal{P}}$ is a point mass data generating distribution, such that the density of $\mathcal{P}_t$ is given by $f_t(y,a,l)=(1-t)f(y,a,l)+t\mathbbm{1}_{(\Tilde{y},\Tilde{a};\Tilde{l})}(y,a,l)$, where $\mathbbm{1}_{(\Tilde{y},\Tilde{a};\Tilde{l})}(y,a,l)$ is a Dirac delta function and $f(y,a,l)$ is the joint density of $Y$, $A$ and $L$. In the same way, we also define all other functions, e.g. $f_t(a|l)=f_t(a,l)/f_t(l)$. To calculate the efficient influence function, we first give some intermediate results.

\begin{lemma}\label{lemma1}
\begin{align*}
    \frac{d}{dt} Q_{\tau,t}(Y|A,L)\Big\lvert_{t=0} &= \frac{\mathbbm{1}_{\Tilde{a},\Tilde{l}}(A,L)}{f\{Q_\tau(Y|A,L),A,L\}}\Big[ \tau-I\{\Tilde{y}\leq Q_\tau(Y|A,L)\}\Big]
\end{align*}
\begin{proof}
Starting from 
\[
\tau = \int_{-\infty}^{Q_\tau(Y|A,L)}f(Y|A,L)\,dY = \int_{-\infty}^{Q_\tau(Y|A,L)}\frac{f(Y,A,L)}{f(A,L)}\,dY
\]
and using the Leibniz integral rule, one can find
\begin{align*}
    0 &= \frac{d}{dt} \left( \int_{-\infty}^{Q_{\tau,t}(Y|A,L)}\frac{f_t(Y,A,L)}{f_t(A,L)}\,dY \right) \Bigg\lvert_{t=0} \\
    &= \frac{f(Q_\tau(Y|A,L),A,L)}{f(A,L)} \frac{d}{dt} Q_{\tau,t}(Y|A,L)\Big\lvert_{t=0} + \int_{-\infty}^{Q_\tau(Y|A,L)}\frac{\mathbbm{1}_{\Tilde{y},\Tilde{a},\Tilde{l}}(Y,A,L)-f(Y,A,L)}{f(A,L)}\,dY\\
    & \qquad\qquad\qquad - \int_{-\infty}^{Q_\tau(Y|A,L)}\frac{[\mathbbm{1}_{\Tilde{a},\Tilde{l}}(A,L)-f(A,L)]f(Y,A,L)}{f(A,L)^2}\,dY\\
    &= \frac{f(Q_\tau(Y|A,L),A,L)}{f(A,L)} \frac{d}{dt} Q_{\tau,t}(Y|A,L)\Big\lvert_{t=0} + \int_{-\infty}^{Q_\tau(Y|A,L)}\frac{\mathbbm{1}_{\Tilde{y},\Tilde{a},\Tilde{l}}(Y,A,L)}{f(A,L)}\,dY\\
    & \qquad\qquad\qquad - \int_{-\infty}^{Q_\tau(Y|A,L)}\frac{\mathbbm{1}_{\Tilde{a},\Tilde{l}}(A,L)f(Y,A,L)}{f(A,L)^2}\,dY\\
    &= \frac{f(Q_\tau(Y|A,L),A,L)}{f(A,L)} \frac{d}{dt} Q_{\tau,t}(Y|A,L)\Big\lvert_{t=0} + \frac{\mathbbm{1}_{\Tilde{a},\Tilde{l}}(A,L)}{f(A,L)}\int_{-\infty}^{Q_\tau(Y|A,L)}\mathbbm{1}_{\Tilde{y}}(Y)\,dY\\
    & \qquad\qquad\qquad - \frac{\mathbbm{1}_{\Tilde{a},\Tilde{l}}(A,L)}{f(A,L)}\int_{-\infty}^{Q_\tau(Y|A,L)}f(Y|A,L)\,dY\\
    &= \frac{f(Q_\tau(Y|A,L),A,L)}{f(A,L)} \frac{d}{dt} Q_{\tau,t}(Y|A,L)\Big\lvert_{t=0} + \frac{\mathbbm{1}_{\Tilde{a},\Tilde{l}}(A,L)}{f(A,L)} I\{\Tilde{y}\leq Q_\tau(Y|A,L)\} \\
    & \qquad\qquad\qquad - \frac{\mathbbm{1}_{\Tilde{a},\Tilde{l}}(A,L)}{f(A,L)}\tau.
\end{align*}
This leads to
\[
\frac{d}{dt} Q_{\tau,t}(Y|A,L)\Big\lvert_{t=0} = \frac{\mathbbm{1}_{\Tilde{a},\Tilde{l}}(A,L)}{f(Q_\tau(Y|A,L),A,L)}\Big( \tau-I\{\Tilde{y}\leq Q_\tau(Y|A,L)\}\Big).
\]
\end{proof}
\end{lemma}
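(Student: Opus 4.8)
The plan is to differentiate, along the submodel $\mathcal{P}_t$, the identity that defines the conditional quantile. For all $t$ in a neighbourhood of $0$, and with $A$ and $L$ held fixed, we have
\[
\tau = \int_{-\infty}^{Q_{\tau,t}(Y|A,L)} f_t(Y|A,L)\, dY = \int_{-\infty}^{Q_{\tau,t}(Y|A,L)} \frac{f_t(Y,A,L)}{f_t(A,L)}\, dY .
\]
Since the left-hand side is constant in $t$, I would differentiate both sides at $t=0$ and apply the Leibniz integral rule. This produces two contributions: a boundary term equal to $\tfrac{d}{dt}Q_{\tau,t}(Y|A,L)\big\lvert_{t=0}$ times the integrand evaluated at the upper limit $Q_\tau(Y|A,L)$, namely $f\{Q_\tau(Y|A,L),A,L\}/f(A,L)$, and an integral of the $t$-derivative of the integrand over $(-\infty, Q_\tau(Y|A,L)]$.

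Next I would evaluate $\tfrac{d}{dt}\{f_t(Y,A,L)/f_t(A,L)\}\big\lvert_{t=0}$ by the quotient rule, using the elementary facts $\tfrac{d}{dt} f_t(y,a,l)\big\lvert_{t=0} = \mathbbm{1}_{\Tilde y,\Tilde a,\Tilde l}(y,a,l) - f(y,a,l)$ and $\tfrac{d}{dt} f_t(a,l)\big\lvert_{t=0} = \mathbbm{1}_{\Tilde a,\Tilde l}(a,l) - f(a,l)$. The two $-f$ terms coming from numerator and denominator cancel, leaving only Dirac-delta pieces; after factoring out $\mathbbm{1}_{\Tilde a,\Tilde l}(A,L)/f(A,L)$, the residual integrals are $\int_{-\infty}^{Q_\tau(Y|A,L)} \mathbbm{1}_{\Tilde y}(Y)\, dY = I\{\Tilde y \le Q_\tau(Y|A,L)\}$ and $\int_{-\infty}^{Q_\tau(Y|A,L)} f(Y|A,L)\, dY = \tau$. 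Collecting terms, the differentiated identity reads
\[
0 = \frac{f\{Q_\tau(Y|A,L),A,L\}}{f(A,L)}\,\frac{d}{dt} Q_{\tau,t}(Y|A,L)\Big\lvert_{t=0} + \frac{\mathbbm{1}_{\Tilde a,\Tilde l}(A,L)}{f(A,L)}\Big( I\{\Tilde y \le Q_\tau(Y|A,L)\} - \tau\Big),
\]
and multiplying through by $f(A,L)$ and solving for the derivative yields the claimed formula.

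I expect the main obstacle to be purely bookkeeping rather than conceptual: ensuring that the two $-f$ contributions from the quotient rule cancel exactly, keeping track of which Dirac delta (in $(y,a,l)$ versus in $(a,l)$) multiplies which factor, and checking that $\mathbbm{1}_{\Tilde y,\Tilde a,\Tilde l}(Y,A,L) = \mathbbm{1}_{\Tilde a,\Tilde l}(A,L)\,\mathbbm{1}_{\Tilde y}(Y)$ so that the $\mathbbm{1}_{\Tilde a,\Tilde l}(A,L)$ factor can be pulled outside the $Y$-integral. Justifying the interchange of $\tfrac{d}{dt}$ with the integral is straightforward here because $f_t$ is affine in $t$; likewise the use of the Leibniz rule is legitimate since $Q_{\tau,t}(Y|A,L)$ is differentiable in $t$ and the integrand is jointly regular enough away from the point mass.
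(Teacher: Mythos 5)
Your proposal is correct and takes essentially the same route as the paper's proof: differentiate the defining identity $\tau=\int_{-\infty}^{Q_{\tau,t}(Y|A,L)}f_t(Y,A,L)/f_t(A,L)\,dY$ at $t=0$ via the Leibniz rule, apply the quotient rule with $\tfrac{d}{dt}f_t\big\lvert_{t=0}=\mathbbm{1}-f$, use the cancellation of the two $-f$ contributions and the factorization $\mathbbm{1}_{\Tilde{y},\Tilde{a},\Tilde{l}}(Y,A,L)=\mathbbm{1}_{\Tilde{a},\Tilde{l}}(A,L)\,\mathbbm{1}_{\Tilde{y}}(Y)$, and solve the resulting linear equation for the derivative. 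The bookkeeping points you flag are exactly the ones the paper's displayed computation carries out, so there is no gap.
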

\begin{lemma}\label{lemma2}
\begin{align*}
    \frac{d}{dt}(A-E_t(A|L))Q_{\tau,t}(Y|A,L)\Big\lvert_{t=0} &= -\frac{\mathbbm{1}_{\Tilde{l}}(L)}{f(L)}(\Tilde{a}-E(A|L))Q_\tau(Y|A,L)\\
    &\qquad\qquad + \{A-E(A|L)\}\mathbbm{1}_{\Tilde{a},\Tilde{l}}(A,L)\frac{\tau-I\{\Tilde{y}\leq Q_\tau(Y|A,L)\}}{f(Q_\tau(Y|A,L),A,L)}
\end{align*}
\end{lemma}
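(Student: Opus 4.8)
The plan is to expand the derivative by the ordinary product rule, using that $A$ itself carries no dependence on $t$:
\[
\frac{d}{dt}(A-E_t(A\mid L))Q_{\tau,t}(Y\mid A,L)\Big\lvert_{t=0} = -\Big(\frac{d}{dt}E_t(A\mid L)\Big\lvert_{t=0}\Big)Q_\tau(Y\mid A,L) + \{A-E(A\mid L)\}\frac{d}{dt}Q_{\tau,t}(Y\mid A,L)\Big\lvert_{t=0}.
\]
The second summand is handled immediately by Lemma \ref{lemma1}, which already supplies $\frac{d}{dt}Q_{\tau,t}(Y\mid A,L)\big\lvert_{t=0}$; multiplying that expression by $\{A-E(A\mid L)\}$ reproduces the second term on the right-hand side of the claimed identity. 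So the whole task reduces to evaluating $\frac{d}{dt}E_t(A\mid L)\big\lvert_{t=0}$.

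To do this, I would write $E_t(A\mid L)=\int a\,f_t(a,l)/f_t(l)\,da$ and differentiate via the quotient rule, using the submodel derivatives $\frac{d}{dt}f_t(a,l)\big\lvert_{t=0}=\mathbbm{1}_{\Tilde{a},\Tilde{l}}(a,l)-f(a,l)$ and $\frac{d}{dt}f_t(l)\big\lvert_{t=0}=\mathbbm{1}_{\Tilde{l}}(l)-f(l)$. Expanding, two of the four resulting terms (those proportional to $f(a,l)f(l)$) cancel, exactly as in the proof of Lemma \ref{lemma1}, leaving
\[
\frac{d}{dt}E_t(A\mid L)\Big\lvert_{t=0}=\frac{1}{f(L)}\int a\,\mathbbm{1}_{\Tilde{a},\Tilde{l}}(a,L)\,da-\frac{\mathbbm{1}_{\Tilde{l}}(L)}{f(L)^2}\int a\,f(a,L)\,da=\frac{\mathbbm{1}_{\Tilde{l}}(L)}{f(L)}\{\Tilde{a}-E(A\mid L)\},
\]
where the first Dirac integral collapses to $\Tilde{a}\,\mathbbm{1}_{\Tilde{l}}(L)$ and the second equals $f(L)\,E(A\mid L)$. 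Substituting this back into the product-rule expansion and multiplying by $-Q_\tau(Y\mid A,L)$ yields the first term on the right-hand side of the statement, completing the argument.

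The main obstacle is purely the bookkeeping with the Dirac-delta ``densities'': one must arrange the quotient-rule expansion so that the genuine-density pieces visibly cancel in pairs and only the point-mass pieces survive the $da$-integration. There is no analytic subtlety beyond this, and in particular no differentiability or interchange-of-limits concern arises that was not already implicitly handled in Lemma \ref{lemma1}.
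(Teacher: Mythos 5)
Your proposal is correct and follows essentially the same route as the paper: a product-rule expansion, with the second summand dispatched by Lemma \ref{lemma1} and the first reduced to computing $\frac{d}{dt}E_t(A\mid L)\big\lvert_{t=0}$. The only difference is that you derive the Gateaux derivative of the conditional mean explicitly from the submodel densities (correctly), whereas the paper simply asserts its value; this is the same standard identity it also uses without comment in Lemma \ref{lemma3}.
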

\begin{proof}
    \begin{align*}
    \frac{d}{dt}(A-E_t(A|L))Q_{\tau,t}(Y|A,L)\Big\lvert_{t=0} &= \frac{d}{dt}(A-E_t(A|L))\Big\lvert_{t=0} Q_{\tau}(Y|A,L)\\
    &\qquad\qquad+ (A-E(A|L))\frac{d}{dt}Q_{\tau,t}(Y|A,L)\Big\lvert_{t=0}\\ 
    &=-\frac{\mathbbm{1}_{\Tilde{l}}(L)}{f(L)}(\Tilde{a}-E(A|L))Q_\tau(Y|A,L)\\
    &\qquad\qquad + \{A-E(A|L)\}\mathbbm{1}_{\Tilde{a},\Tilde{l}}(A,L)\frac{\tau-I\{\Tilde{y}\leq Q_\tau(Y|A,L)\}}{f(Q_\tau(Y|A,L),A,L)}
\end{align*}
\end{proof}
\begin{lemma}\label{lemma3}
\begin{align*}
    \frac{d}{dt}(A-E_t(A|L))^2\Big\lvert_{t=0} &= - 2 \{A-E(A|L)\}(\Tilde{a}-E(A|L))\frac{\mathbbm{1}_{\Tilde{l}}(L)}{f(L)}
\end{align*}
\end{lemma}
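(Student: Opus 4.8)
The plan is to reduce the claim to the chain rule together with the pathwise derivative of the conditional mean $E_t(A\mid L)$ --- the latter being exactly the object that already appeared in the proof of Lemma \ref{lemma2}. Since $A$ does not depend on $t$, I would first write
\[
\frac{d}{dt}(A-E_t(A|L))^2 = 2(A-E_t(A|L))\frac{d}{dt}(A-E_t(A|L)) = -2(A-E_t(A|L))\frac{d}{dt}E_t(A|L),
\]
and then evaluate at $t=0$ to obtain $-2\{A-E(A|L)\}\,\frac{d}{dt}E_t(A|L)\big|_{t=0}$.

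The second step is to compute $\frac{d}{dt}E_t(A|L)\big|_{t=0}$. Writing $E_t(A|L) = \left(\int a\, f_t(a,L)\,da\right)\big/ f_t(L)$ and applying the quotient rule at $t=0$, with $\frac{d}{dt}f_t(a,l)\big|_{t=0} = \mathbbm{1}_{\Tilde{a},\Tilde{l}}(a,l)-f(a,l)$, $\frac{d}{dt}f_t(l)\big|_{t=0} = \mathbbm{1}_{\Tilde{l}}(l)-f(l)$, and using $\int a\, f(a,L)\,da = E(A|L)f(L)$ together with $\int a\, \mathbbm{1}_{\Tilde{a},\Tilde{l}}(a,L)\,da = \Tilde{a}\,\mathbbm{1}_{\Tilde{l}}(L)$, the two $E(A|L)f(L)^2$ contributions in the numerator cancel and one is left with
\[
\frac{d}{dt}E_t(A|L)\Big\lvert_{t=0} = \frac{\mathbbm{1}_{\Tilde{l}}(L)}{f(L)}\{\Tilde{a}-E(A|L)\}.
\]
This is precisely the derivative used to produce the first term in Lemma \ref{lemma2}, so one could alternatively just invoke it from there.

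Combining the two steps gives $\frac{d}{dt}(A-E_t(A|L))^2\big|_{t=0} = -2\{A-E(A|L)\}\{\Tilde{a}-E(A|L)\}\,\mathbbm{1}_{\Tilde{l}}(L)/f(L)$, which is the statement. There is no real obstacle here: the only care needed is the bookkeeping with the Dirac delta in $f_t$ --- integrating $a$ against $\mathbbm{1}_{\Tilde{a},\Tilde{l}}$ over $a$ produces $\Tilde{a}\,\mathbbm{1}_{\Tilde{l}}(L)$ --- and verifying that the $f(L)^2$ terms in the quotient-rule numerator indeed cancel, which they do. Everything else is elementary differentiation.
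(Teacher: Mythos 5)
Your proposal is correct and follows essentially the same route as the paper: apply the chain rule to reduce the claim to $-2\{A-E(A|L)\}\frac{d}{dt}E_t(A|L)\big|_{t=0}$, then substitute the pathwise derivative of the conditional mean. The only difference is that you derive $\frac{d}{dt}E_t(A|L)\big|_{t=0}=\mathbbm{1}_{\Tilde{l}}(L)\{\Tilde{a}-E(A|L)\}/f(L)$ explicitly via the quotient rule, whereas the paper invokes it as known (it is the same quantity used in Lemma \ref{lemma2}); your computation of it is correct.
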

\begin{proof}
\begin{align*}
    \frac{d}{dt}(A-E_t(A|L))^2\Big\lvert_{t=0} &= - 2 \{A-E(A|L)\}\frac{d}{dt}E_t(A|L)\Big\lvert_{t=0}\\
    &= - 2 \{A-E(A|L)\}(\Tilde{a}-E(A|L))\frac{\mathbbm{1}_{\Tilde{l}}(L)}{f(L)}
\end{align*}
\end{proof}

Taking the derivative of $\Psi_\tau(\mathcal{P}_t)$ with respect to $t$, evaluated at $t=0$, results in
\begin{align*}
    \frac{d}{dt}\Psi_\tau(\mathcal{P}_t)\Big\lvert_{t=0} &= \frac{\frac{d}{dt}E_t[(A-E_t(A|L))Q_{\tau,t}(Y|A,L)]\Big\lvert_{t=0}}{E[\{A-E(A|L)\}^2]}\\
    &\qquad\qquad\qquad\qquad- \frac{\Psi_\tau}{E[\{A-E(A|L)\}^2]}\frac{d}{dt}E_t[(A-E_t(A|L))^2]\Big\lvert_{t=0}\\
    \\
    &= \frac{(\Tilde{a}-E(A|L=\Tilde{l}))Q_\tau(Y|A=\Tilde{a},L=\Tilde{l})}{E[\{A-E(A|L)\}^2]} - \Psi_\tau + \frac{E\left[\frac{d}{dt}(A-E_t(A|L))Q_{\tau,t}(Y|A,L)\Big\lvert_{t=0}\right]}{E[\{A-E(A|L)\}^2]} \\
    &\qquad - \frac{\Psi_\tau}{E[\{A-E(A|L)\}^2]}\Bigg[ (\Tilde{a}-E(A|L=\Tilde{l}))^2 - E[\{A-E(A|L)\}^2]+ E\left(\frac{d}{dt}(A-E_t(A|L))^2\Big\lvert_{t=0}\right) \Bigg]\\
    \\
    &= \frac{1}{E[\{A-E(A|L)\}^2]}\Bigg[ (\Tilde{a}-E(A|L=\Tilde{l}))Q_\tau(Y|A=\Tilde{a},L=\Tilde{l})\\
    &\qquad\qquad- \iint \frac{\mathbbm{1}_{\Tilde{l}}(L)}{f(L)}(\Tilde{a}-E(A|L))Q_\tau(Y|A,L)f(A,L)\,dA\,dL\\
    &\qquad\qquad + \iint \{A-E(A|L)\}\mathbbm{1}_{\Tilde{a},\Tilde{l}}(A,L)\frac{\tau-I\{\Tilde{y}\leq Q_\tau(Y|A,L)\}}{f(Q_\tau(Y|A,L),A,L)}f(A,L)\,dA\,dL\\
    &\qquad\qquad- \Psi_\tau\Bigg( (\Tilde{a}-E(A|L=\Tilde{l}))^2 - 2\iint \{A-E(A|L)\}(\Tilde{a}-E(A|L))\frac{\mathbbm{1}_{\Tilde{l}}(L)}{f(L)}f(A,L)\,dA\,dL \Bigg)\Bigg]\\
    \\
    &= \frac{1}{E[\{A-E(A|L)\}^2]}\Bigg[ (\Tilde{a}-E(A|L=\Tilde{l}))Q_\tau(Y|A=\Tilde{a},L=\Tilde{l})\\
    &\qquad\qquad- (\Tilde{a}-E(A|L=\Tilde{l})) \int Q_\tau(Y|A,L=\Tilde{l})f(A|L=\Tilde{l})\,dA\\
    &\qquad\qquad + (\Tilde{a}-E(A|L=\Tilde{l}))\frac{\tau-I\{\Tilde{y}\leq Q_\tau(Y|A=\Tilde{a},L=\Tilde{l})\}}{f(Q_\tau(Y|A=\Tilde{a},L=\Tilde{l})|A=\Tilde{a},L=\Tilde{l})}\\
    &\qquad\qquad- \Psi_\tau\Bigg( (\Tilde{a}-E(A|L=\Tilde{l}))^2 - 2(\Tilde{a}-E(A|L=\Tilde{l}))\int (A-E(A|L=\Tilde{l}))f(A|L=\Tilde{l})\,dA \Bigg)\Bigg]\\
    \\
    &= \frac{\Tilde{a}-E(A|L=\Tilde{l})}{E[\{A-E(A|L)\}^2]}\Bigg[ Q_\tau(Y|A=\Tilde{a},L=\Tilde{l}) -  E\Big(Q_\tau(Y|A,L=\Tilde{l})\big\lvert L=\Tilde{l}\Big)\\
    &\qquad\qquad\qquad\qquad + \frac{\tau-I\{\Tilde{y}\leq Q_\tau(Y|A=\Tilde{a},L=\Tilde{l})\}}{f(Q_\tau(Y|A=\Tilde{a},L=\Tilde{l})|A=\Tilde{a},L=\Tilde{l})}\\
    &\qquad\qquad\qquad\qquad- \Psi_\tau\Bigg( (\Tilde{a}-E(A|L=\Tilde{l})) - 2E\big( A-E(A|L=\Tilde{l})|L=\Tilde{l}\big) \Bigg)\Bigg]\\
    \\
    &= \frac{\Tilde{a}-E(A|L=\Tilde{l})}{E[\{A-E(A|L)\}^2]}\Bigg[Q_\tau(Y|A=\Tilde{a},L=\Tilde{l}) - E\Big(Q_\tau(Y|A,L=\Tilde{l})\big\lvert L=\Tilde{l}\Big)\\
    &\qquad\qquad\qquad\qquad + \frac{\tau-I\{\Tilde{y}\leq Q_\tau(Y|A=\Tilde{a},L=\Tilde{l})\}}{f(Q_\tau(Y|A=\Tilde{a},L=\Tilde{l})|A=\Tilde{a},L=\Tilde{l})} - \Psi_\tau (\Tilde{a}-E(A|L=\Tilde{l})) \Bigg],
\end{align*}
where we use Lemma \ref{lemma2} and Lemma \ref{lemma3} in the third equation. Going back to the random variables, this results in the following efficient influence function:
\begin{align*}
    \frac{A-E(A|L)}{E[\{A-E(A|L)\}^2]}\Bigg[Q_\tau(Y|A,L) - E\left\{Q_\tau(Y|A,L)|L\right\} + \frac{\tau-I\{Y\leq Q_\tau(Y|A,L)\}}{f\{Q_\tau(Y|A,L)|A,L\}} - \Psi_\tau \{A-E(A|L)\} \Bigg].
\end{align*}

\subsection{Proof of Theorem \ref{theorem2}}
Using a von Mises expansion (see e.g.\ \cite{vdv2000asymptotic}), we have
\begin{align*}
    \hat{\Psi}_\tau-\Psi_\tau &= \frac{1}{n}\sum_{i=1}^n \frac{A_i-E(A_i|L_i)}{E[\{A-E(A|L)\}^2]}\Bigg[Q_\tau(Y_i|A_i,L_i) - E\left\{Q_\tau(Y|A,L)|L\right\}\\
    &\qquad\qquad+ \frac{\tau-I\{Y_i\leq Q_\tau(Y_i|A_i,L_i)\}}{f\{Q_\tau(Y_i|A_i,L_i)|A_i,L_i\}} - \Psi_\tau (A_i-E(A_i|L_i)) \Bigg] + R_1+R_2.
\end{align*}
Here, the first term is $O_p(n^{-1/2})$ and 
\begin{align*}
    R_1 &= \frac{1}{n}\sum_{i=1}^n \Bigg[\frac{A_i-\hat{E}(A_i|L_i)}{\hat{E}[(A_i-\hat{E}(A_i|L_i))^2]}\Bigg\{ \hat{Q}_\tau(Y_i|A_i,L_i) - \hat{E}\Big(\hat{Q}_\tau(Y_i|A_i,L_i)\big\lvert  L_i\Big)\\
    & \qquad\qquad\qquad\qquad\qquad\qquad+ \frac{\tau-I\{Y_i\leq \hat{Q}_\tau(Y_i|A_i,L_i)\}}{\hat{f}\{\hat{Q}_\tau(Y_i|A_i,L_i)|A_i,L_i\}} - \hat{\Psi}_\tau (A_i-\hat{E}(A_i|L_i)) \Bigg\}\\
    & \qquad\qquad\qquad-\frac{A_i-E(A_i|L_i)}{E[\{A-E(A|L)\}^2]}\Bigg\{Q_\tau(Y_i|A_i,L_i) - E\left\{Q_\tau(Y|A,L)|L\right\}\\
    & \qquad\qquad\qquad\qquad\qquad\qquad+ \frac{\tau-I\{Y_i\leq Q_\tau(Y_i|A_i,L_i)\}}{f\{Q_\tau(Y_i|A_i,L_i)|A_i,L_i\}} - \Psi_\tau (A_i-E(A_i|L_i)) \Bigg\}\Bigg]\\
    &\qquad - E\Bigg[\frac{A-\hat{E}(A|L)}{\hat{E}[(A-\hat{E}(A|L))^2]}\Bigg\{ \hat{Q}_\tau(Y|A,L) - \hat{E}\Big(\hat{Q}_\tau(Y|A,L)\big\lvert L\Big)\\
    & \qquad\qquad\qquad\qquad\qquad\qquad+ \frac{\tau-I\{Y\leq \hat{Q}_\tau(Y|A,L)\}}{\hat{f}(\hat{Q}_\tau(Y|A,L)|A,L)} - \hat{\Psi}_\tau (A-\hat{E}(A|L)) \Bigg\}\\
    & \qquad\qquad\qquad-\frac{A-E(A|L)}{E[\{A-E(A|L)\}^2]}\Bigg\{Q_\tau(Y|A,L) - E\left\{Q_\tau(Y|A,L)|L\right\}\\
    & \qquad\qquad\qquad\qquad\qquad\qquad+ \frac{\tau-I\{Y\leq Q_\tau(Y|A,L)\}}{f\{Q_\tau(Y|A,L)|A,L\}} - \Psi_\tau \{A-E(A|L)\} \Bigg\}\Bigg]
\end{align*}
and
\begin{align*}
    R_2 &= \hat{\Psi}_\tau-\Psi_\tau + E\Bigg[\frac{A-\hat{E}(A|L)}{\hat{E}[(A-\hat{E}(A|L))^2]}\Bigg\{ \hat{Q}_\tau(Y|A,L) - \hat{E}\Big(\hat{Q}_\tau(Y|A,L)\big\lvert L\Big)\\
    & \qquad\qquad\qquad\qquad\qquad\qquad+ \frac{\tau-I\{Y\leq \hat{Q}_\tau(Y|A,L)\}}{\hat{f}(\hat{Q}_\tau(Y|A,L)|A,L)} - \hat{\Psi}_\tau (A-\hat{E}(A|L)) \Bigg\}    \Bigg].
\end{align*}

The term $R_1$ can be shown to be $o_p(n^{-1/2})$ by using sample-splitting. A detailed elaboration of this can be found along the lines of the Supplementary Material of \cite{AssumptionLean}. In what follows we take a deeper look into $R_2$ to understand under what conditions it is $o_p(n^{-1/2})$. From the definition of $R_2$, it follows that
\begin{align*}
    R_2 &= \hat{\Psi}_\tau-\Psi_\tau - \hat{\Psi}_\tau\frac{E[(A-\hat{E}(A|L))^2]}{\hat{E}[(A-\hat{E}(A|L))^2]} \\
    &\qquad\qquad + E\Bigg[\frac{A-\hat{E}(A|L)}{\hat{E}[(A-\hat{E}(A|L))^2]}\Bigg\{ \hat{Q}_\tau(Y|A,L) - \hat{E}\Big(\hat{Q}_\tau(Y|A,L)\big\lvert L\Big) + \frac{\tau-I\{Y\leq \hat{Q}_\tau(Y|A,L)\}}{\hat{f}(\hat{Q}_\tau(Y|A,L)|A,L)}  \Bigg\}    \Bigg]\\
    &= (\hat{\Psi}_\tau -\Psi_\tau)\Bigg[ 1- \frac{E[(A-\hat{E}(A|L))^2]}{\hat{E}[(A-\hat{E}(A|L))^2]}\Bigg]\\
    &\qquad\qquad+ E\Bigg[\frac{A-\hat{E}(A|L)}{\hat{E}[(A-\hat{E}(A|L))^2]}\Bigg\{ \hat{Q}_\tau(Y|A,L) - \hat{E}\Big(\hat{Q}_\tau(Y|A,L)\big\lvert L\Big)\\
    & \qquad\qquad\qquad\qquad\qquad\qquad+ \frac{\tau-I\{Y\leq \hat{Q}_\tau(Y|A,L)\}}{\hat{f}(\hat{Q}_\tau(Y|A,L)|A,L)} - \Psi_\tau (A-\hat{E}(A|L)) \Bigg\} \Bigg].\\
\end{align*}
Here, the first term is $o_p(|\hat{\Psi}_\tau -\Psi_\tau|)$, and thus a lower order term. To understand the behaviour of the remaining terms, we make use of
\begin{align*}
    \Psi_\tau &= E\Bigg[\frac{A-E(A|L)}{E[(A-E(A|L))^2]}\Bigg\{Q_\tau(Y|A,L)- E\Big(Q_\tau(Y|A,L)\big\lvert L\Big)+ \frac{\tau-I\{Y\leq Q_\tau(Y|A,L)\}}{f\{Q_\tau(Y|A,L)|A,L\}} \Bigg\}\Bigg].
\end{align*}
To simplify the notation in the following calculations, let $\zeta=Q_\tau(Y|A,L)$. Then the remaining terms can be written as
\begin{align*}
    &\frac{1}{\hat{E}[(A-\hat{E}(A|L))^2]}\Bigg( E\left[(A-\hat{E}(A|L))\Big\{ \hat{\zeta} - \hat{E}(\hat{\zeta}|L)+ \frac{\tau-I\{Y\leq \hat{\zeta}\}}{\hat{f}(\hat{\zeta}|A,L)}\Big\}\right]\\
    &\qquad\qquad - E\left[(A-E(A|L))\Big\{ \zeta - E(\zeta|L)+ \frac{\tau-I\{Y\leq \zeta\}}{f(\zeta|A,L)}\Big\}\right]\frac{E[(A-\hat{E}(A|L))^2]}{E[(A-E(A|L))^2]} \Bigg)\\
    &=\frac{1}{\hat{E}[(A-\hat{E}(A|L))^2]}\Bigg( E\left[(A-\hat{E}(A|L))\Big\{ \hat{\zeta} - \hat{E}(\hat{\zeta}|L)+ \frac{\tau-P(Y\leq \hat{\zeta}|A,L)}{\hat{f}(\hat{\zeta}|A,L)}\Big\}\right]\\
    &\qquad\qquad - E\left[(A-E(A|L))\Big\{ \zeta - E(\zeta|L)+ \frac{\tau-P(Y\leq \zeta|A,L)}{f(\zeta|A,L)}\Big\}\right]\frac{E[(A-\hat{E}(A|L))^2]}{E[(A-E(A|L))^2]} \Bigg)\\
    &=\frac{1}{\hat{E}[(A-\hat{E}(A|L))^2]}\Bigg( E\left[(A-\hat{E}(A|L))\Big\{ \hat{\zeta} - \hat{E}(\hat{\zeta}|L)+ \frac{\tau-P(Y\leq \hat{\zeta}|A,L)}{\hat{f}(\hat{\zeta}|A,L)}\Big\}\right]\\
    &\qquad\qquad - E\left[(A-E(A|L))\Big\{ \zeta - E(\zeta|L)\Big\}\right]\frac{E[(A-\hat{E}(A|L))^2]}{E[(A-E(A|L))^2]} \Bigg).
\end{align*}
Next, using a Taylor expansion around $\tau$, one can write
\[
P(Y\leq \hat{\zeta}|A,L) = \tau + f(\zeta|A,L)(\hat{\zeta}-\zeta)+O_p\{(\hat{\zeta}-\zeta)^2\}.
\]
This reduces the remaining terms to
\begin{align*}
    &\frac{1}{\hat{E}[(A-\hat{E}(A|L))^2]}\Bigg( E\left[(A-\hat{E}(A|L))\Big\{ \hat{\zeta} - \hat{E}(\hat{\zeta}|L)- \frac{f(\zeta|A,L)}{\hat{f}(\hat{\zeta}|A,L)}(\hat{\zeta}-\zeta)+O_p\{(\hat{\zeta}-\zeta)^2\}\Big\}\right]\\
    &\qquad\qquad - E\left[(A-E(A|L))\Big\{ \zeta - E(\zeta|L)\Big\}\right]\frac{E[(A-\hat{E}(A|L))^2]}{E[(A-E(A|L))^2]} \Bigg)\\
    &=\frac{1}{\hat{E}[(A-\hat{E}(A|L))^2]}\Bigg( E\left[(A-\hat{E}(A|L))\Big\{\zeta - \hat{E}(\hat{\zeta}|L)- \left(\frac{f(\zeta|A,L)}{\hat{f}(\hat{\zeta}|A,L)}-1\right)(\hat{\zeta}-\zeta)+O_p\{(\hat{\zeta}-\zeta)^2\}\Big\}\right]\\
    &\qquad\qquad - E\left[(A-E(A|L))\Big\{ \zeta - E(\zeta|L)\Big\}\right]\frac{E[(A-\hat{E}(A|L))^2]}{E[(A-E(A|L))^2]} \Bigg)\\
    &=\frac{1}{\hat{E}[(A-\hat{E}(A|L))^2]}\Bigg( E\left[(A-E(A|L))\Big\{\zeta - \hat{E}(\hat{\zeta}|L)- \left(\frac{f(\zeta|A,L)}{\hat{f}(\hat{\zeta}|A,L)}-1\right)(\hat{\zeta}-\zeta)+O_p\{(\hat{\zeta}-\zeta)^2\}\Big\}\right]\\
    &\qquad\qquad + E\left[(E(A|L)-\hat{E}(A|L))\Big\{\zeta - \hat{E}(\hat{\zeta}|L)- \left(\frac{f(\zeta|A,L)}{\hat{f}(\hat{\zeta}|A,L)}-1\right)(\hat{\zeta}-\zeta)+O_p\{(\hat{\zeta}-\zeta)^2\}\Big\}\right]\\
    &\qquad\qquad - E\left[(A-E(A|L))\Big\{ \zeta - E(\zeta|L)\Big\}\right]\\
    &\qquad\qquad + \left(1-\frac{E[(A-\hat{E}(A|L))^2]}{E[(A-E(A|L))^2]}\right)E\left[(A-E(A|L))\Big\{ \zeta - E(\zeta|L)\Big\}\right] \Bigg)\\
    &=\frac{1}{\hat{E}[(A-\hat{E}(A|L))^2]}\Bigg( E\left[(A-E(A|L))\Big\{ - \hat{E}(\hat{\zeta}|L)- \left(\frac{f(\zeta|A,L)}{\hat{f}(\hat{\zeta}|A,L)}-1\right)(\hat{\zeta}-\zeta)+O_p\{(\hat{\zeta}-\zeta)^2\}\Big\}\right]\\
    &\qquad\qquad + E\left[(E(A|L)-\hat{E}(A|L))\Big\{\zeta - \hat{E}(\hat{\zeta}|L)- \left(\frac{f(\zeta|A,L)}{\hat{f}(\hat{\zeta}|A,L)}-1\right)(\hat{\zeta}-\zeta)+O_p\{(\hat{\zeta}-\zeta)^2\}\Big\}\right]\\
    &\qquad\qquad - E\left[(A-E(A|L))\Big\{ - E(\zeta|L)\Big\}\right]\\
    &\qquad\qquad + \left(1-\frac{E[(A-\hat{E}(A|L))^2]}{E[(A-E(A|L))^2]}\right)E\left[(A-E(A|L))\Big\{ \zeta - E(\zeta|L)\Big\}\right] \Bigg)\\
    &=\frac{1}{\hat{E}[(A-\hat{E}(A|L))^2]}\Bigg( E\left[(A-E(A|L))\Big\{O_p\{(\hat{\zeta}-\zeta)^2\}- \left(\frac{f(\zeta|A,L)}{\hat{f}(\hat{\zeta}|A,L)}-1\right)(\hat{\zeta}-\zeta)\Big\}\right]\\
    &\qquad + E\left[(E(A|L)-\hat{E}(A|L))\Big\{\zeta - \hat{E}(\hat{\zeta}|L)+E(\zeta|L)-E(\zeta|L)- \left(\frac{f(\zeta|A,L)}{\hat{f}(\hat{\zeta}|A,L)}-1\right)(\hat{\zeta}-\zeta)+O_p\{(\hat{\zeta}-\zeta)^2\}\Big\}\right]\\
    &\qquad\qquad + \left(1-\frac{E[(A-\hat{E}(A|L))^2]}{E[(A-E(A|L))^2]}\right)E\left[(A-E(A|L))\Big\{ \zeta - E(\zeta|L)\Big\}\right] \Bigg)\\
    &=\frac{1}{\hat{E}[(A-\hat{E}(A|L))^2]}\Bigg( E\left[(A-E(A|L))\Big\{O_p\{(\hat{\zeta}-\zeta)^2\}- \left(\frac{f(\zeta|A,L)}{\hat{f}(\hat{\zeta}|A,L)}-1\right)(\hat{\zeta}-\zeta)\Big\}\right]\\
    &\qquad\qquad + E\left[(E(A|L)-\hat{E}(A|L))\Big\{\zeta -E(\zeta|L)- \left(\frac{f(\zeta|A,L)}{\hat{f}(\hat{\zeta}|A,L)}-1\right)(\hat{\zeta}-\zeta)+O_p\{(\hat{\zeta}-\zeta)^2\}\Big\}\right]\\
    &\qquad\qquad + E\left[(E(A|L)-\hat{E}(A|L))\Big\{E(\zeta|L)- \hat{E}(\hat{\zeta}|L)\Big\}\right]\\
    &\qquad\qquad + \left(1-\frac{E[(A-\hat{E}(A|L))^2]}{E[(A-E(A|L))^2]}\right)E\left[(A-E(A|L))\Big\{ \zeta - E(\zeta|L)\Big\}\right] \Bigg)\\
    &=\frac{1}{\hat{E}[(A-\hat{E}(A|L))^2]}\Bigg( E\left[(A-E(A|L))\Big\{O_p\{(\hat{\zeta}-\zeta)^2\}- \left(\frac{f(\zeta|A,L)}{\hat{f}(\hat{\zeta}|A,L)}-1\right)(\hat{\zeta}-\zeta)\Big\}\right]\\
    &\qquad\qquad + E\left[(E(A|L)-\hat{E}(A|L))\Big\{O_p\{(\hat{\zeta}-\zeta)^2\}- \left(\frac{f(\zeta|A,L)}{\hat{f}(\hat{\zeta}|A,L)}-1\right)(\hat{\zeta}-\zeta)\Big\}\right]\\
    &\qquad\qquad + E\left[(E(A|L)-\hat{E}(A|L))\Big\{E(\zeta|L)- \hat{E}(\hat{\zeta}|L)\Big\}\right]\\
    &\qquad\qquad + \left(1-\frac{E[(A-E(A|L)+E(A|L)-\hat{E}(A|L))^2]}{E[(A-E(A|L))^2]}\right)E\left[(A-E(A|L))\Big\{ \zeta - E(\zeta|L)\Big\}\right] \Bigg)\\
    &=\frac{1}{\hat{E}[(A-\hat{E}(A|L))^2]}\Bigg( E\left[(A-\hat{E}(A|L))\Big\{O_p\{(\hat{\zeta}-\zeta)^2\}- \left(\frac{f(\zeta|A,L)}{\hat{f}(\hat{\zeta}|A,L)}-1\right)(\hat{\zeta}-\zeta)\Big\}\right]\\
    &\qquad\qquad + E\left[(E(A|L)-\hat{E}(A|L))\Big\{E(\zeta|L)- \hat{E}(\hat{\zeta}|L)\Big\}\right]\\
    &\qquad\qquad - \frac{E[2(A-E(A|L))(E(A|L)-\hat{E}(A|L))+(E(A|L)-\hat{E}(A|L))^2]}{E[(A-E(A|L))^2]}E\left[(A-E(A|L))\Big\{ \zeta - E(\zeta|L)\Big\}\right] \Bigg)\\
    &=\frac{1}{\hat{E}[(A-\hat{E}(A|L))^2]}\Bigg( E\left[(A-\hat{E}(A|L))O_p\{(\hat{\zeta}-\zeta)^2\}\right] + E\left[(A-\hat{E}(A|L))\left(1-\frac{f(\zeta|A,L)}{\hat{f}(\hat{\zeta}|A,L)}\right)(\hat{\zeta}-\zeta)\right]\\
    &\qquad\qquad + E\left[(E(A|L)-\hat{E}(A|L))\Big\{E(\zeta|L)- \hat{E}(\hat{\zeta}|L)\Big\}\right]\\
    &\qquad\qquad - \frac{E[(E(A|L)-\hat{E}(A|L))^2]}{E[(A-E(A|L))^2]}E\left[(A-E(A|L))\Big\{ \zeta - E(\zeta|L)\Big\}\right] \Bigg)\\
    &=\frac{1}{\hat{E}[(A-\hat{E}(A|L))^2]}\Bigg( E\left[(A-\hat{E}(A|L))O_p\{(\hat{\zeta}-\zeta)^2\}\right] + E\left[(A-\hat{E}(A|L))\left(1-\frac{f(\zeta|A,L)}{\hat{f}(\hat{\zeta}|A,L)}\right)(\hat{\zeta}-\zeta)\right]\\
    &\qquad\qquad + E\left[(E(A|L)-\hat{E}(A|L))\Big\{E(\zeta|L)- \hat{E}(\hat{\zeta}|L)\Big\}\right] - E[(E(A|L)-\hat{E}(A|L))^2]\Psi_\tau \Bigg).
\end{align*}

Assuming $A-\hat{E}(A|L) = O_p(1)$ and using the Cauchy-Schwarz inequality, the term $R_2$ is $o_p(n^{-1/2})$ under the following conditions:
\begin{align*}
    E\left[ (\hat{Q}_\tau(Y|A,L)-Q_\tau(Y|A,L))^2 \right] &= o_p(n^{-1/2}),\\
    E\left[ \left(1-\frac{f\{Q_\tau(Y|A,L)|A,L\}}{\hat{f}(\hat{Q}_\tau(Y|A,L)|A,L)}\right)^2 \right]^{1/2}E\left[ (\hat{Q}_\tau(Y|A,L)-Q_\tau(Y|A,L))^2 \right]^{1/2} &= o_p(n^{-1/2}),\\
    E\left[(E(A|L)-\hat{E}(A|L))^2\right]^{1/2}E\left[ \left(E(Q_\tau(Y|A,L)|L)-\hat{E}(\hat{Q}_\tau(Y|A,L)|L)\right)^2 \right]^{1/2} &= o_p(n^{-1/2}),\\
    E\left[(E(A|L)-\hat{E}(A|L))^2\right]&= o_p(n^{-1/2}).
\end{align*}
Note that if $\Psi_\tau=0$, the final condition is redundant.

\subsection{Sketch of a proof that the sum in (\ref{tmlefun}) is $o_P(n^{-1/2})$}
Here, we give a sketch of a proof that
\begin{align*}
\frac{1}{n}\sum_{i=1}^n\left(A_i-\hat{E}(A_i|L_i)\right)\Bigg[ \frac{\tau-I\{Y_i\leq \widetilde{Q}_\tau(Y_i|A_i,L_i)\}}{\hat{f}(\widetilde{Q}_\tau(Y_i|A_i,L_i)|A_i,L_i)} \Bigg] = o_P(n^{-1/2}).
\end{align*}
We consider
\[
\widetilde{Q}_\tau(Y|A,L) = \hat{Q}_\tau(Y|A,L) + \epsilon  w(\hat{Q}_{\tau,i}),
\]
with $w(\hat{Q}_{\tau,i})$ defined as in (\ref{weight}). Thus, the aim is to prove that
\begin{align}\label{appD:tmle}
\frac{1}{n}\sum_{i=1}^nw(\hat{Q}_{\tau,i})\left[ \tau-I\{Y_i\leq \hat{Q}_\tau(Y_i|A_i,L_i) + \epsilon  w(\hat{Q}_{\tau,i})\} \right]
\end{align}
is $o_P(n^{-1/2})$.

One can control the sign of (\ref{appD:tmle}) by changing the value of $\epsilon$. Indeed, for $\epsilon\to + \infty$ and $w(\hat{Q}_{\tau,i})\geq 0$, we have 
\[w(\hat{Q}_{\tau,i})\left[ \tau-I\{Y_i\leq \hat{Q}_\tau(Y_i|A_i,L_i) + \epsilon  w(\hat{Q}_{\tau,i})\} \right] = w(\hat{Q}_{\tau,i})(\tau-1)\leq 0.
\] 
For $\epsilon\to + \infty$ and $w(\hat{Q}_{\tau,i})\leq 0$, we have \[w(\hat{Q}_{\tau,i})\left[ \tau-I\{Y_i\leq \hat{Q}_\tau(Y_i|A_i,L_i) + \epsilon  w(\hat{Q}_{\tau,i})\} \right] = w(\hat{Q}_{\tau,i})\tau\leq 0,\] 
such that 
\[\lim_{\epsilon\to +\infty}\frac{1}{n}\sum_{i=1}^nw(\hat{Q}_{\tau,i})\left[ \tau-I\{Y_i\leq \hat{Q}_\tau(Y_i|A_i,L_i) + \epsilon  w(\hat{Q}_{\tau,i})\} \right]\leq 0.\]
Analogously, one can show
\[\lim_{\epsilon\to -\infty}\frac{1}{n}\sum_{i=1}^nw(\hat{Q}_{\tau,i})\left[ \tau-I\{Y_i\leq \hat{Q}_\tau(Y_i|A_i,L_i) + \epsilon  w(\hat{Q}_{\tau,i})\} \right]\geq 0.\]
By slightly adjusting the value of $\epsilon$, terms from the summation can change from 
$w(\hat{Q}_{\tau,i})(\tau-1)$ to $w(\hat{Q}_{\tau,i})\tau$ or vice versa. Changing one term leads to a change of (\ref{appD:tmle}) of magnitude $\frac{w(\hat{Q}_{\tau,i})}{n}$. This implies that we can get (\ref{appD:tmle}) within a distance $\lvert\frac{w(\hat{Q}_{\tau,i})}{n}\lvert$ of 0. If $w(\hat{Q}_{\tau,i})=O_P(1)$, which holds if $A-\hat{E}(A|L)=O_p(1)$ and $\frac{1}{\hat{f}(Q_\tau(Y|A,L)|A,L)}>\sigma$ with probability 1 for some $\sigma>0$, we have that (\ref{appD:tmle}) is $O_P(1/n)$. Potentially, multiple terms change at the same time. If $M$ terms change at the same time, we have that (\ref{appD:tmle}) lies within a distance $M\lvert\frac{w(\hat{Q}_{\tau,i})}{n}\lvert$ of 0. Consequently, (\ref{appD:tmle}) is $O_P(M/n)$. If $M=o_p(\sqrt{n})$, we have that (\ref{appD:tmle}) is $o_P(n^{-1/2})$.

\newpage
\section{Appendix C}\label{appC}
\subsection{More details about the construction of estimator (\ref{TMLEvs})}
Assume that a (stepwise) variable selection estimator for $Q_\tau(Y|A,L)$ has the following form: 
\[\hat{Q}_\tau(Y|A,L) = \hat{\beta}_\tau A+ \hat{\alpha}^TL^*,\] 
with $L^*$ consisting of a subset of variables of $L$. Retargeting
\begin{equation*}
    \widetilde{Q}_\tau(Y\mid A,L) = \hat{Q}_\tau(Y\mid A,L) + \hat{\epsilon}  w(\hat{Q}_{\tau,i}),
\end{equation*} then leads to:
\[
\widetilde{Q}_\tau(Y|A,L) = \hat{\beta} A+ \hat{\alpha}^TL^* + \epsilon \frac{A-\hat{E}(A|L)}{\hat{f}(\hat{Q}_\tau(Y|A,L)|A,L)}.
\]
Then, one can obtain
\begin{align*}
    &\widetilde{Q}_\tau(Y|A,L) - E\Big(\widetilde{Q}_\tau(Y|A,L)\big\lvert L\Big)\\
    &= \hat{\beta} A+ \hat{\alpha}^TL^* + \epsilon \frac{A-\hat{E}(A|L)}{\hat{f}(\hat{Q}_\tau(Y|A,L)|A,L)} - E\left(\hat{\beta} A+ \hat{\alpha}^TL^* + \epsilon \frac{A-\hat{E}(A|L)}{\hat{f}(\hat{Q}_\tau(Y|A,L)|A,L)}\big\lvert L\right)\\
    &= \hat{\beta}(A-\hat{E}(A|L)) + \epsilon\left\{\frac{A-\hat{E}(A|L)}{\hat{f}(\widetilde{Q}_\tau(Y|A,L)|A,L)} - E\left(\frac{A-\hat{E}(A|L)}{\hat{f}(\widetilde{Q}_\tau(Y|A,L)|A,L)}\right)\right\}
\end{align*}
which allows us to construct a TMLE estimator as follows:
\begin{align*}
    \hat{\Psi}_\tau^{TMLE} &= \frac{1}{n}\sum_{i=1}^n\frac{A_i-\hat{E}(A_i|L_i)}{\frac{1}{n}\sum_{i=1}^n(A_i-\hat{E}(A_i|L_i))^2}\Bigg[\widetilde{Q}_\tau(Y_i|A_i,L_i)\\
    &\qquad\qquad- \hat{E}\Big(\widetilde{Q}_\tau(Y_i|A_i,L_i)\big\lvert  L_i\Big) + \frac{\tau-I\{Y_i\leq \widetilde{Q}_\tau(Y_i|A_i,L_i)\}}{\hat{f}(\widetilde{Q}_\tau(Y_i|A_i,L_i)|A_i,L_i)} \Bigg]\\
    &= \hat{\beta} + \frac{1}{n}\sum_{i=1}^n\frac{A_i-\hat{E}(A_i|L_i)}{\frac{1}{n}\sum_{i=1}^n(A_i-\hat{E}(A_i|L_i))^2}\Bigg[\epsilon\Bigg\{\frac{A_i-\hat{E}(A_i|L_i)}{\hat{f}(\widetilde{Q}_\tau(Y_i|A_i,L_i)|A_i,L_i)}\\
    &\qquad-\hat{E}\left(\frac{A_i-\hat{E}(A_i|L_i)}{\hat{f}(\widetilde{Q}_\tau(Y_i|A_i,L_i)|A_i,L_i)}\big\lvert L_i\right)\Bigg\}
    + \frac{\tau-I\{Y_i\leq \widetilde{Q}_\tau(Y_i|A_i,L_i)\}}{\hat{f}(\widetilde{Q}_\tau(Y_i|A_i,L_i)|A_i,L_i)} \Bigg].
\end{align*}
Similarly to (\ref{TMLE}), the last term will be approximately zero.

\newpage
\section{Appendix D}\label{appD}
\subsection{Additional simulation results for experiment 1}\label{app: addsimres}
Experiment 1 was partially carried out with the computationally more expensive `haldensify' estimator for the conditional density of $Y$ given $A$ and $L$. In setting 3 we did not consider a TMLE estimator because this was computationally too heavy. The results are shown in Table \ref{tab:hal}.

In general, the DML estimator performs better when using this conditional density estimator. However, the TMLE estimator performs better when using the simpler `FKSUM' density estimator, especially for more extreme quantiles.
\begin{table}[h]
\centering
\addtolength{\leftskip}{-2cm}
\addtolength{\rightskip}{-2cm}
\begin{tabular}{clcccccccccccccc}
 &  &   \multicolumn{4}{c}{$\tau = 0.5$} &  & \multicolumn{4}{c}{$\tau = 0.75$} &  & \multicolumn{4}{c}{$\tau = 0.9$} \\
\multirow{-2}{*}{\textbf{Setting}} & \multirow{-2}{*}{\textbf{estimator}}  & \textbf{bias} & \textbf{SD} & \textbf{SE} & \textbf{Cov} &  & \textbf{bias} & \textbf{SD} & \textbf{SE} & \textbf{Cov} &  & \textbf{bias} & \textbf{SD} & \textbf{SE} & \textbf{Cov} \\
 & Oracle   & 0.21 & 19 & 20 & 97.4 &  & -1.2 & 33 & 35 & 95.3 &  & -2.9 & 57 & 60 & 95.7 \\
 & Plug-in   & -70 & 13 & 1.5 & 0.1 &  & -73 & 14 & 1.8 & 0.1 &  & -63 & 23 & 3.5 & 0.2 \\
 & DML   & -3.6 & 25 & 22 & 91.0 &  & -6.6 & 42 & 32 & 83.6 &  & -32 & 40 & 35 & 79.5 \\
 & DML-CF   & -5.9 & 22 & 21 & 91.9 &  & -17 & 33 & 30 & 86.8 &  & -40 & 36 & 25 & 59.2 \\
 & TMLE   & -39 & 25 & 23 & 58.8 &  & -69 & 41 & 33 & 44.7 &  & -140 & 66 & 45 & 15.1 \\
 \multirow{-6}{*}{1} & TMLE-CF &  -1.4 & 24 & 24 & 95.3 &  & -1.4 & 39 & 30 & 85.6 &  & 15 & 76 & 24 & 47.8 \\
 & Oracle   & 0.044 & 25 & 26 & 94.6 &  & -0.025 & 44 & 45 & 94.6 &  & -4.0 & 74 & 79 & 95.3 \\
 & Plug-in   & -87 & 27 & 2.8 & 0.5 &  & -120 & 40 & 4.1 & 1.0 &  & -150 & 59 & 6.8 & 1.3 \\
 & DML & 3 & 32 & 27 & 90.7 &  & -11 & 55 & 39 & 78.5 &  & -97 & 64 & 42 & 39.3 \\
 & DML-CF   & -5.0 & 28 & 24 & 88.0 &  & -34 & 43 & 33 & 74.0 &  & -110 & 60 & 26 & 15.2 \\
 & TMLE   & -39 & 30 & 29 & 70.0 &  & -92 & 57 & 43 & 44.1 &  & -210 & 72 & 51 & 6.3 \\
\multirow{-6}{*}{2} & TMLE-CF & 7.0 & 31 & 28 & 91.6 &  & 6.5 & 49 & 33 & 82.3 &  & 32 & 89 & 25 & 40.9 \\
 & Oracle   & 0.19 & 3.4 & 3.6 & 94.6 &  & -0.11 & 6.2 & 6.3 & 94.0 &  & -0.30 & 10 & 11 & 94.9 \\
 & Plug-in   & -17 & 6.4 & 1.6 & 1.5 &  & -25 & 8.2 & 1.9 & 0.5 &  & -39 & 11 & 2.1 & 0.3 \\
 & DML   & -5.2 & 7.1 & 4.2 & 63.4 &  & -8.1 & 9.3 & 5.8 & 60.8 &  & -33 & 11 & 4.5 & 2.6 \\
\multirow{-4}{*}{3} & DML-CF   & -4.8 & 4.0 & 3.4 & 66.7 &  & -13 & 5.7 & 3.4 & 13.4 &  & -2.8 & 8.1 & 3.1 & 0.5
\end{tabular}
\caption{Simulation results for experiment 1 with density estimated via the `haldensify' R package: sample size 500, Monte Carlo bias, Monte Carlo standard deviation (SD), average of the influence function based standard errors (SE) and coverage of 95\% Wald confidence intervals (Cov). All values have been multiplied by $10^2$.}
\label{tab:hal}
\end{table}

Table \ref{tab:tmleterm} shows a summary of the term we tried to make zero with targeted learning (\ref{tmlefun}). When using cross-fitting, we observe that this term is very close to zero, averaging over all simulation runs.
\begin{table}[h]
\centering
\begin{tabular}{cccc}
 &  & TMLE & TMLE-CF \\
Minimum &  & -0.22 & -0.059 \\
1st Quartile && -0.087 & -0.0029\\
Median &  & -0.0028 & 0.00013 \\
Mean &  & -0.036 & 0.00023 \\
3rd Quartile && 0.00013 & 0.0033\\
Maximum &  & 0.20 & 0.064
\end{tabular}
\caption{Summary of the term we wish to make zero with targeted learning (\ref{tmlefun}) for the homoscedastic setting in experiment 1.}
\label{tab:tmleterm}
\end{table}

\begin{table}[h]
\centering
\addtolength{\leftskip}{-2cm}
\addtolength{\rightskip}{-2cm}
\begin{tabular}{ccccccc}
\textbf{Quantile} & \textbf{estimator} &  & \textbf{bias} & \textbf{SD} & \textbf{SE} & \textbf{Cov} \\
\multirow{2}{*}{$0.5$} & TMLE.1 && -0.38 & 0.25 & 0.19 & 44.6 \\
 & TMLE.1-CF && 0.082 & 0.27 & 0.18 & 83.0\\
\multirow{2}{*}{$0.75$}& TMLE.1 && -0.71 & 0.34 & 0.25 & 25.2 \\
 & TMLE.1-CF && 0.10 & 0.44 & 0.25 & 73.7 \\
\multirow{2}{*}{$0.9$} & TMLE.1 && -1.1 & 0.47 & 0.27 & 14.0 \\
 & TMLE.1-CF && 0.16 & 0.75 & 0.34 & 62.2
\end{tabular}
\caption{Simulation results for the homoscedasticity setting in experiment 1 with the 1-step TMLE estimator: sample size $n=500$, quantile $\tau$, Monte Carlo bias, Monte Carlo standard deviation (SD), average of the influence function based standard errors (SE) and coverage of 95\% Wald confidence intervals (Cov).}
\label{table1step}
\end{table}

The TMLE(-CF) estimator for continuous exposure was adjusted such that only 1 targeting step was carried out. For binary exposure, this step was repeated until convergence. In Table \ref{table1step}, we show the results of the 1-step TMLE estimator for binary exposure (experiment 1, setting 1). Comparing these results with the results in Table \ref{table1}, it can be observed that 1-step targeting does not achieve the same performance as targeting until convergence.

\subsection{Additional simulation results for experiment 4}
Table \ref{tab: exp4app} contains the results for the following estimators for the data generating mechanism described in experiment 4: the oracle estimator (Oracle), the naive plug-in (Plug-in) estimator, the debiased machine learning estimator (\ref{DML}) based on quantile random forests without cross-fitting (DML) and with 5-fold cross-fitting (DML-CF), the targeted learning estimator (\ref{TMLE}) without cross-fitting (TMLE) and with 5-fold cross-fitting (TMLE-CF). We furthermore included some adjusted estimators. One is obtained by performing quantile regression with AIC-based backward stepwise variable selection (QRvs). Another one is the debiased machine learning estimator defined in (\ref{DMLvs}) where quantile random forests are replaced by quantile regression with AIC-based backward stepwise variable selection, once without cross-fitting (DMLvs) and once with 5-fold cross-fitting (DMLvs-CF). The third one is the TMLE estimator defined in (\ref{TMLEvs}), once without cross-fitting (TMLEvs) and once with 5-fold cross-fitting (TMLEvs-CF).

\begin{table}[h]
\centering
\addtolength{\leftskip}{-2cm}
\addtolength{\rightskip}{-2cm}
\begin{tabular}{cccccccccccc}
& & & \multicolumn{4}{c}{$\mathbf{n=250}$} &  & \multicolumn{4}{c}{$\mathbf{n=500}$} \\ 
\multirow{-2}{*}{\textbf{Quantile}} & \multirow{-2}{*}{\textbf{estimator}} &  & \textbf{bias} & \textbf{SD} & \textbf{SE} & \textbf{Cov} &  & \multicolumn{1}{c}{\textbf{bias}} & \multicolumn{1}{c}{\textbf{SD}} & \multicolumn{1}{c}{\textbf{SE}} & \multicolumn{1}{c}{\textbf{Cov}} \\
\multirow{11}{*}{$\tau=0.5$} & Oracle && 0.66 & 15 & 16 & 96.0 &  &  0.25 & 10 & 11 & 95.7     \\
 & QRvs && 2.4 & 17 & 16 & 91.9 &  &  0.36 & 12 & 11 & 92.8      \\
 & Plug-in && -36 & 11 & 3.6 & 0.9 &  &  -23 & 8.6 & 2.6 & 2.7       \\
 & DML && -12 & 20 & 20 & 90.1 &  &  -20 & 13 & 12 &   59.4    \\
 & DML-CF && 6.6 & 16 & 15 & 90.9 &  &  4.0 & 10 & 10 &   92.1     \\
 & DMLvs && 2.0 & 18 & 13 & 81.8 &  &  0.33 & 12 & 9.9 &   90.2   \\
 & DMLvs-CF && 1.6 & 17 & 9.8 & 74.0 &  &  0.043 & 12 & 8.5 &   84.1     \\
  & TMLE && -33 & 18 & 20 & 61.0 &  &  -26 & 15 & 12 &   40.0    \\
 & TMLE-CF && 19 & 20 & 22 & 87.7 &  &  8.5 & 13 & 15 &   92.8     \\
 & TMLEvs && -29 & 25 & 13 & 67.5 &  &  -0.49 & 15 & 10 &   81.7   \\
 & TMLEvs-CF && 1.3 & 20 & 19 & 94.3 &  &  -0.32 & 13 & 12 &   94.8     \\
\multirow{11}{*}{$\tau=0.75$} & Oracle && -0.016 & 16 & 18 & 96.7 &  &  -0.54 & 12 & 12 &   95.0   \\
 & QRvs && -1.4 & 19 & 17 & 91.3 &  &  0.0095 & 13 & 12 &  93.2     \\
 & Plug-in && -42 & 12 & 4.1 & 1.1 &  &  -29 & 9.1 & 2.9 & 1.4     \\
 & DML && -19 & 20 & 19 & 77.8 &  &  -22 & 13 & 11 & 47.3    \\
 & DML-CF && -0.44 & 18 & 17 & 94.4 &  &  -2.3 & 12 & 11 & 92.3     \\
 & DMLvs && -0.78 & 20 & 12 & 77.2 &  &  -0.24 & 13 & 9.9 &  86.2    \\
 & DMLvs-CF && -0.90 & 18 & 10 & 72.7 &  &  -0.21 & 12 & 8.8 &  82.5    \\
  & TMLE && -41 & 21 & 20 & 45.0 &  &  -37 & 16 & 12 &   19.9    \\
 & TMLE-CF && 16 & 24 & 24 & 88.5 &  &  5.7 & 16 & 16 &   93.7     \\
 & TMLEvs && -3.9 & 28 & 13 & 57.4 &  &  -1.1 & 17 & 9.9 &   72.5   \\
 & TMLEvs-CF && 0.52 & 21 & 21 & 94.0 &  &  -0.28 & 14 & 13 &   93.5     \\
\multirow{11}{*}{$\tau=0.9$} & Oracle && -0.88 & 21 & 22 & 95.7 &  &  0.39 & 14 & 15 &  95.8  \\
 & QRvs && 0.065 & 23 & 20 & 89.7 &  &  0.094 & 16 & 14 & 90.8      \\
 & Plug-in && -64 & 8.4 & 037 & 0.0 &  &  -50 & 7.5 & 3.1 & 0.0      \\
 & DML && -50 & 16 & 19 & 27.3 &  &  -46 & 9.1 & 9.1 &  0.8     \\
 & DML-CF && -9.0 & 25 & 26 & 93.0 &  &  -14 & 16 & 16 & 84.8 \\
 & DMLvs && -0.52 & 25 & 11 & 60.9 &  &  -0.23 & 17 & 9.3 &  73.8    \\
 & DMLvs-CF && 0.41 & 20 & 11 & 73.0 &  &  0.021 & 14 & 9.5 &   80.7  \\
  & TMLE && -67 & 23 & 23 & 18.1 &  &  -67 & 16 & 13 &   3.1    \\
 & TMLE-CF && 21 & 38 & 33 & 83.1 &  &  4.2 & 27 & 22 &   88.1     \\
 & TMLEvs && -6.1 & 41 & 11 & 30.4 &  &  -1.2 & 26 & 9.5 &   42.8   \\
 & TMLEvs-CF && 0.21 & 24 & 24 & 94.4 &  &  -0.18 & 16 & 16 &   93.7  
\end{tabular}
\caption{Simulation results for variable selection: sample size $n$, quantile $\tau$, Monte Carlo bias, Monte Carlo standard deviation (SD), average of the influence function based standard errors (SE) and coverage of 95\% Wald confidence intervals (Cov). All values have been multiplied by $10^2$.}
\label{tab: exp4app}
\end{table}    

\clearpage
\subsection{Propensity score comparison: experiment 1 vs.\ 2}\label{app: propsc}
To get some insight into the propensity scores used in experiment 2, histograms are provided below. The left histogram shows a histogram of the propensity scores used in experiment 1 (with binary $A$). On the right, a histogram of the propensity scores from experiment 2 is shown. Each time, a sample of size 100,000 was generated according to the two data-generating mechanisms. Then a histogram of the propensity scores was constructed. It is clear that more extreme propensity scores occur in experiment 2.
\begin{figure}[h]
     \centering
     \begin{subfigure}[h]{0.45\textwidth}
         \centering
         \includegraphics[width=\textwidth]{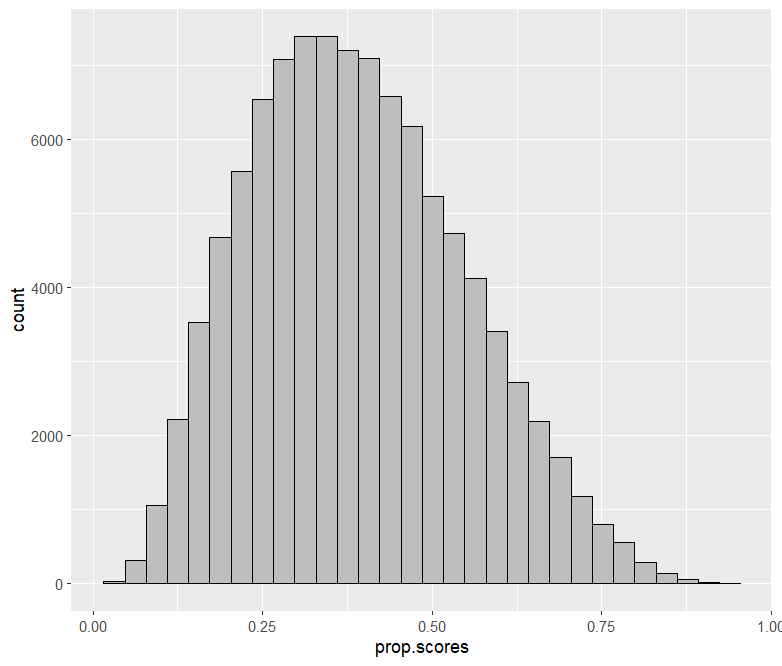}
     \end{subfigure}
     \hfill
     \begin{subfigure}[h]{0.45\textwidth}
         \centering
         \includegraphics[width=\textwidth]{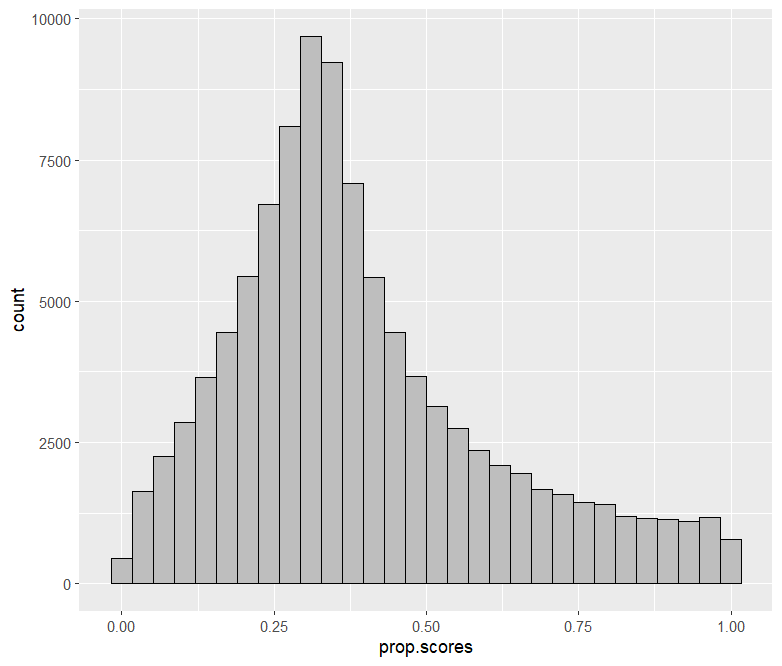}
     \end{subfigure}
        \caption{Propensity score histograms corresponding to the binary exposure settings of experiment 1 (left) and experiment 2 (right).}
        \label{fig:histograms}
\end{figure}

\newpage
\section{Appendix E}\label{appE}
\subsection{Sensitivity analysis for the data analysis}
As shown by the simulation studies, the uncertainty of the proposed TMLE estimator is sometimes underestimated for extreme quantiles. Depending on how the folds in the cross-fitting procedure are chosen, the results may differ. We repeated the analysis for the 90th percentile 10 times with different seeds, once with 5-fold cross-fitting and once with 10-fold cross-fitting. The results are shown in Fig.\ \ref{fig:sens}
\begin{figure}[h]
    \centering
    \includegraphics[width=0.8\textwidth]{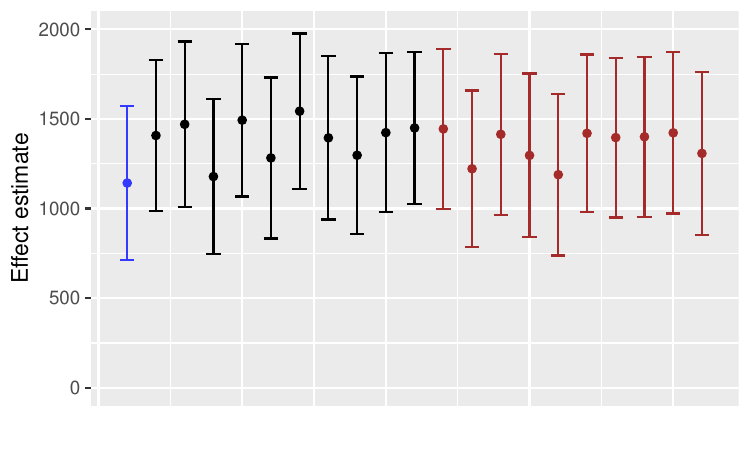}
    \caption{Repeating the analysis 10 times for the 90th percentile, once with 5 folds (shown in black), once with 10 folds (shown in brown). The blue interval corresponds to the one reported in Section \ref{sec:data}.}
    \label{fig:sens}
\end{figure}

\newpage
\section{Appendix F}\label{appF}
\subsection{Generalization to link function}
There are settings in which it could be useful to generalize the estimand (\ref{estimand3}) (or, equivalently, (\ref{estimand2}) so that it includes a link function. A realistic example is a survival analysis where the outcome can only take positive values (e.g.\ when it is assumed to be exponentially distributed with exponential hazard). The quantile in our estimand could take negative values, which makes no sense in this setting. It would be reasonable here to use a link function and replace model (\ref{model}) by
\begin{equation}\label{linkmodel}
    g\{Q_\tau(Y|A,L)\} = \beta_\tau A + \omega_\tau (L),
\end{equation}
with link function $g$ equal to the natural logarithm. This straightforwardly leads to the following estimand for $\beta_\tau$:
\begin{equation}\label{linkestimand}
    \Theta_\tau = \frac{E[\{A-E(A|L)\}(g\{Q_\tau(Y|A,L)\}-E[g\{Q_\tau(Y|A,L)\}|L])]}{E[\{A-E(A|L)\}^2]}.
\end{equation}
For this estimand, the efficient influence function can be calculated in the same way as in the proof of Theorem \ref{theorem1}. The main difference occurs in Lemma \ref{lemma2}. With link function, by making use of the chain rule one would obtain
\begin{align*}
    &\frac{d}{dt}(A-E_t(A|L))g\{Q_{\tau,t}(Y|A,L)\}\Big\lvert_{t=0}\\
    &= \frac{d}{dt}(A-E_t(A|L))\Big\lvert_{t=0} g\{Q_{\tau,t}(Y|A,L)\}+ (A-E(A|L))\frac{d}{dt}\left(g\{Q_{\tau,t}(Y|A,L)\}\right)\Big\lvert_{t=0}\\ 
    &=-\frac{\mathbbm{1}_{\Tilde{l}}(L)}{f(L)}(\Tilde{a}-E(A|L))g\{Q_{\tau,t}(Y|A,L)\} + \{A-E(A|L)\}g'\{Q_\tau(Y|A,L)\}\frac{d}{dt}Q_{\tau,t}(Y|A,L)\Big\lvert_{t=0}\\
    &=-\frac{\mathbbm{1}_{\Tilde{l}}(L)}{f(L)}(\Tilde{a}-E(A|L))g\{Q_{\tau,t}(Y|A,L)\} + \{A-E(A|L)\}g'\{Q_\tau(Y|A,L)\}\mathbbm{1}_{\Tilde{a},\Tilde{l}}(A,L)\frac{\tau-I\{\Tilde{y}\leq Q_\tau(Y|A,L)\}}{f(Q_\tau(Y|A,L),A,L)}
\end{align*}
such that the efficient influence function is
\begin{align*}
    & \phi(Y,A,L)=\frac{A-E(A|L)}{E[\{A-E(A|L)\}^2]}\bigg[g\{Q_\tau(Y|A,L)\}- E[(g\{Q_\tau(Y|A,L)\}|L]\\
    &\qquad\qquad\qquad+ \frac{\tau-I\{Y\leq Q_\tau(Y|A,L)\}}{f\{Q_\tau(Y|A,L)|A,L\}}g'\{Q_\tau(Y|A,L)\}- \Theta_\tau \{A-E(A|L)\} \bigg].
\end{align*}
The proposed estimators can be easily adjusted for the estimation of $\Theta_\tau$.
\end{document}